\documentclass[12pt]{article}
\usepackage{amsmath}
\usepackage{graphicx,psfrag,epsf}
\usepackage{enumerate}
\usepackage{psfrag,epsf}
\usepackage{url} % not crucial - just used below for the URL
\usepackage{amsmath,amssymb,amsfonts,amsthm,mathtools,mathrsfs}
\usepackage{booktabs}

\usepackage{adjustbox}
\usepackage{tikz}
\usepackage{pgfplots}
\usepgfplotslibrary{fillbetween}
\pgfplotsset{compat=1.16}

\usepackage{bm,bbm}
\usepackage{color}
\usepackage{subfigure}
\usepackage{algorithm,algpseudocode}
\usepackage[symbol]{footmisc}
\usepackage{scalefnt}
\usepackage{authblk} % author and affiliations
\usepackage{multirow,centernot}
\usepackage[colorlinks=true, citecolor=blue, urlcolor=blue]{hyperref}
\usepackage{longtable}
\usepackage{booktabs}

\usepackage{natbib}
\usepackage{dsfont}
\usepackage[title]{appendix}
\usepackage{newtxtext,newtxmath} % Times Roman fonts (text and math)
\usepackage{mhchem} % for \ce macro

\input cyracc.def

\usepackage[left=1in,top=1.1in,right=0.5in,bottom=1in]{geometry}

\newtheorem{theorem}{Theorem}[section]

\newtheorem{corollary}[theorem]{Corollary}

\newtheorem{definition}{Definition}[section]
\newtheorem{lemma}[theorem]{Lemma}
\newtheorem{proposition}[theorem]{Proposition}
\newtheorem{remark}[theorem]{Remark}
\makeatletter
\def\@seccntformat#1{\@ifundefined{#1@cntformat}%
	{\csname the#1\endcsname\quad}%      default
	{\csname #1@cntformat\endcsname}%    enable individual control
}
\makeatother

\newif\ifShowComments
\ShowCommentstrue
\def\strutdepth{\dp\strutbox}
\def\druk#1{\strut\vadjust{\kern-\strutdepth
        {\vtop to \strutdepth{%
                \baselineskip\strutdepth\vss
                        \llap{\hbox{#1}\quad}\null}}}}

\title{\bf
Unifying the Hoover and Gini indices: Analytical, bias, and computational aspects
}

\author{
\text{Roberto Vila}$^{1}$\thanks{Corresponding author: Roberto Vila, email: {rovig161@gmail.com}
}\,,
\,
\text{Helton Saulo}$^{1,2}$ 
\,and \,
\text{Felipe Quintino}$^{1}$
\\
{\small $^{1}$ Department of Statistics, University of Brasilia, Brasilia, Brazil}\\
{\small $^{2}$ Department of Economics, Federal University of Pelotas, Pelotas, Brazil}\\
}
\begin{document}
	\maketitle 	
	\begin{abstract}
We propose a new family of inequality indices that bridges the Hoover index and the Gini coefficient. The measure is defined as the normalized expected absolute value of a convex combination of deviations from the mean and pairwise differences, providing a continuous interpolation between these two classical indices. We establish key theoretical properties, including scale invariance, boundedness, continuity, and compliance with the Pigou--Dalton transfer principle. Analytical representations are derived, allowing explicit evaluation under gamma distributions and leading to closed-form expressions involving incomplete gamma functions. From a statistical perspective, we study the plug-in estimator, obtaining a general expression for its expectation and explicit formulas for its bias under gamma populations. Simulation results indicate good finite-sample performance, with decreasing bias and mean squared error as the sample size increases. An empirical application to GDP per capita data illustrates the practical usefulness of the proposed index as a flexible tool for inequality analysis.
	\end{abstract}
	\smallskip
	\noindent
	{\small {\bfseries Keywords.} {Gamma distribution, Hoover index estimator, Gini coefficient estimator, biased estimator.}}
	\\
	{\small{\bfseries Mathematics Subject Classification (2010).} {MSC 60E05 $\cdot$ MSC 62Exx $\cdot$ MSC 62Fxx.}}
%	

%\clearpage
%{
%	\hypersetup{linkcolor=black}
%	\tableofcontents
%}

%-------------------------------------------------------------------
%
%
%Let $X\geqslant 0$ be a random variable with finite mean
%$
%\mu=\mathbb{E}(X)>0.
%$
%Let $X_1,X_2$ be independent copies of $X$.
%
%The Hoover (Robin Hood) index \citep{Hoover1936} and the Gini coefficient \citep{Gini1936} are respectively defined by
%\[
%H=\frac{1}{2\mu}\, \mathbb{E}|X-\mu|,
%\quad
%G=\frac{1}{2\mu}\, \mathbb{E}|X_1-X_2|.
%\]
%
%
%
%%This index generates a continuous family of inequality measures connecting the
%%Hoover and Gini indices.
%
%For $0<\lambda<1$, the index $I_\lambda$ measures inequality through the
%normalized expected magnitude of a convex combination of two deviations:
%the deviation from the mean and the pairwise difference between two
%independent observations. Hence, $I_\lambda$ captures both dispersion
%around the mean and interpersonal disparities, providing a continuous
%interpolation between the Hoover index ($\lambda=0$) and the Gini
%coefficient ($\lambda=1$).

\section{Introduction}

Measures of economic inequality play a central role in economics, statistics, and the social sciences. Quantifying disparities in income and wealth distributions is essential for understanding economic development, social welfare, and the effects of public policy. Among the most widely used measures of inequality are the Hoover index \citep{Hoover1936}, also known as the Robin Hood index, and the Gini coefficient \citep{Gini1936}. These classical indices have been extensively studied and applied in both theoretical and empirical contexts.

Let $X \geqslant 0$ be a random variable representing income with finite mean $\mu = \mathbb{E}[X] > 0$. Let $X_1$ and $X_2$ be independent copies of $X$. The Hoover index and the Gini coefficient are respectively defined as
\[
H = \frac{1}{2\mu}\, \mathbb{E}|X - \mu|, \quad
G = \frac{1}{2\mu}\, \mathbb{E}|X_1 - X_2|.
\]
The Hoover index measures inequality through the mean absolute deviation from the mean and admits a direct redistributive interpretation, representing the proportion of total income that would need to be reallocated to achieve equality. In contrast, the Gini coefficient is based on pairwise differences between individuals and is closely related to the Lorenz curve \citep{Lorenz1905}, providing a global measure of dispersion across the entire distribution.

Although both indices quantify inequality, they emphasize different aspects of dispersion. The Hoover index focuses on deviations from the mean, while the Gini coefficient captures interpersonal differences. As a consequence, these measures may lead to different conclusions in empirical applications, particularly in distributions with high skewness or extreme values \citep{cowell2007extreme}.

In this paper, we introduce a new family of inequality indices that continuously bridges these two classical measures. For $\lambda \in [0,1]$, we define
\begin{equation}\label{eq:index}
    I_\lambda = \frac{1}{2\mu} \, \mathbb{E}\left| (1-\lambda)(X_1 - \mu) + \lambda(X_1 - X_2) \right|.
\end{equation}
By construction, $I_0 = H$ and $I_1 = G$, so that the proposed family provides a continuous interpolation between the Hoover and Gini indices. For intermediate values of $\lambda$, the index combines dispersion around the mean with pairwise differences, yielding a flexible framework for inequality measurement.

The proposed index offers several advantages. First, it allows for a unified treatment of two classical measures within a single parametric family. Second, it provides a tool for sensitivity analysis with respect to the concept of inequality itself, enabling researchers to assess how conclusions depend on the relative importance assigned to mean deviations versus interpersonal differences. Third, the index admits tractable analytical representations that facilitate both theoretical analysis and statistical inference.

The contributions of this paper are threefold. First, we establish the main theoretical properties of the proposed index, including scale invariance, boundedness, continuity in $\lambda$, and compliance with the Pigou--Dalton transfer principle. Second, we derive analytical representations that allow explicit evaluation of the index for gamma distributions in terms of incomplete gamma functions. Third, we study the statistical properties of the natural plug-in estimator, deriving a general expression for its expectation based on Laplace transform techniques and exponential tilting, and obtaining explicit formulas for its bias under gamma populations.

The remainder of the paper is organized as follows.
In Section \ref{ssection:2} we introduce the proposed index and study its main theoretical properties.
In Section \ref{An auxiliary lemma} we present an auxiliary lemma based on exponential tilting and Laplace transform techniques that will be used in subsequent derivations.
In Section \ref{Bias} we derive general expressions for the expected value of the plug-in estimator of the index for non-negative populations.
In Section \ref{Bias of the estimator} we specialize these results to gamma distributions and obtain an explicit expression for the bias of the estimator.
In Section \ref{Monte Carlo simulation study} we report results from a Monte Carlo simulation study.  Section \ref{empirical application} presents an empirical application. Finally, Section \ref{Concluding remarks} concludes the paper.

\clearpage

\section{Properties of the index}\label{ssection:2}

In this section, for the new index $I_\lambda$ defined in \eqref{eq:index}, we present a general integral form and a closed-form expression for gamma families. 
Additionally, we list several properties, such as boundary cases, non-negativity, range, an upper bound index, scale invariance, and continuity in $\lambda$.

\begin{proposition}\label{prop-in}
	The index $I_\lambda$ of a non-negative, non-degenerate random variable $X$ with finite mean $\mathbb{E}[X]=\mu>0$ can be expressed as
\begin{align*}
	I_\lambda
	=
	1
	-
	{1\over \mu}
	\int_0^\infty
	\overline{F}_X(t^-)
	\overline{F}_{\lambda X_2+(1-\lambda)\mu}(t^-)	\,
	{\rm d}t
	=
	{1\over \mu}
	\int_0^\infty
	F_X(t^-)
	\overline{F}_{\lambda X_2+(1-\lambda)\mu}(t^-)	\,
	{\rm d}t,
\end{align*}
where $F_X$ denotes the distribution function of $X$ , 
$\overline{F}_X(x)=1-F_X(x)$, and  \(F_X(x^-)=\lim_{u\uparrow x}F_X(u)=\mathbb{P}(X<x)\).
\end{proposition}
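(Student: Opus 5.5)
Set $Y=\lambda X_2+(1-\lambda)\mu$, so that $(1-\lambda)(X_1-\mu)+\lambda(X_1-X_2)=X_1-Y$. Here $Y\geqslant 0$ is independent of $X_1$ and $\mathbb{E}[Y]=\mu$. Hence
\[
I_\lambda=\frac{1}{2\mu}\,\mathbb{E}|X_1-Y|,
\]
and the problem reduces to computing the mean absolute difference of two independent non-negative random variables with the same mean.

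The key identity is $|a-b|=a+b-2\min(a,b)$ for $a,b\geqslant 0$. It gives
\[
\mathbb{E}|X_1-Y|=2\mu-2\,\mathbb{E}[\min(X_1,Y)].
\]
Since $\min(X_1,Y)\geqslant 0$, the layer-cake formula gives
\[
\mathbb{E}[\min(X_1,Y)]=\int_0^\infty \mathbb{P}(\min(X_1,Y)>t)\,{\rm d}t .
\]
By independence, $\mathbb{P}(\min(X_1,Y)>t)=\overline F_X(t)\,\overline F_Y(t)$. These tail functions differ from $\overline F(t^-)$ only at countably many atoms, so replacing them does not change the Lebesgue integral. This form is also what one gets from the alternative layer-cake formula $\int_0^\infty\mathbb{P}(Z\geqslant t)\,{\rm d}t$. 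Dividing by $2\mu$ yields the first equality,
\[
I_\lambda=1-\frac1\mu\int_0^\infty \overline F_X(t^-)\,\overline F_Y(t^-)\,{\rm d}t .
\]

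For the second equality, write $1=\mu^{-1}\int_0^\infty \overline F_Y(t^-)\,{\rm d}t$, which holds because $\mathbb{E}[Y]=\mu$. Then
\[
1-\frac1\mu\int_0^\infty\overline F_X(t^-)\,\overline F_Y(t^-)\,{\rm d}t=\frac1\mu\int_0^\infty\bigl(1-\overline F_X(t^-)\bigr)\overline F_Y(t^-)\,{\rm d}t .
\]
The right-hand side equals $\mu^{-1}\int_0^\infty F_X(t^-)\,\overline F_Y(t^-)\,{\rm d}t$, as claimed. Combining the two integrals is legitimate because both are finite.

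There is no real obstacle here; the only care needed is the left-limit bookkeeping. The choice between $\overline F(t)$ and $\overline F(t^-)$ is immaterial up to a null set. The case $\lambda=0$, where $Y\equiv\mu$ is degenerate, is still covered. Non-degeneracy of $X$ is not needed for the identity.
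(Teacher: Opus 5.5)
Your proof is correct and follows essentially the same route as the paper. The paper rewrites the integrand as $X_1-[\lambda X_2+(1-\lambda)\mu]$ and integrates $|y_1-y_2|=\int_0^\infty(\mathbf{1}_{\{y_1\geqslant t\}}+\mathbf{1}_{\{y_2\geqslant t\}}-2\,\mathbf{1}_{\{y_1\geqslant t\}}\mathbf{1}_{\{y_2\geqslant t\}})\,{\rm d}t$, which is exactly the layer-cake form of your $|a-b|=a+b-2\min(a,b)$, and then uses Tonelli, independence and $\mathbb{E}[X]=\int_0^\infty\overline{F}_X(x^-)\,{\rm d}x$ as you do; working with $\mathbb{P}(\cdot\geqslant t)$ from the start, as the paper does, just makes your (correct) remark about countably many atoms unnecessary.
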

\begin{proof}
Since
\[
(1-\lambda)(X_1-\mu)+\lambda(X_1-X_2)
=
X_1-\big[\lambda X_2+(1-\lambda)\mu\big],
\]
and using the identity
\begin{align}\label{id-fund}
|y_1-y_2|
=
\int_0^\infty
\left(
\mathbf{1}_{\{y_1\geqslant t\}}
+
\mathbf{1}_{\{y_2\geqslant t\}}
-
2\,\mathbf{1}_{\{y_1\geqslant t\}}\mathbf{1}_{\{y_2\geqslant t\}}
\right)
{\rm d}t,
\quad y_1,y_2\geqslant 0,
\end{align}
together with Tonelli's theorem and the independence of $X_1$ and $X_2$, we obtain
\begin{align*}
	I_\lambda
	=
	\frac{1}{2\mu}
	\int_0^\infty
	\left[
	\overline{F}_X(t^-)
	+
	\overline{F}_{\lambda X_2+(1-\lambda)\mu}(t^-)
	-
	2\,\overline{F}_X(t^-)\,
	\overline{F}_{\lambda X_2+(1-\lambda)\mu}(t^-)
	\right]
	{\rm d}t.
\end{align*}

Finally, using the well-known identity
\begin{align}\label{id-exp-posit}
\mathbb{E}[X]=\int_0^\infty \overline{F}_X(x^-)\,{\rm d}x,
\quad X\geqslant 0,
\end{align}
the result follows.
\end{proof}

\begin{proposition}\label{prop-Hoover-index}
	The index $I_\lambda$, with $\lambda\in(0,1]$, of a gamma random variable $X\sim\text{Gamma}(\alpha,\beta)$
	is given by
\[
I_\lambda
=
\frac{(1-\lambda)^\alpha\alpha^{\alpha-1} \exp\{-(1-\lambda)\alpha\} }{\Gamma(\alpha)}
+
\lambda\, \frac{\Gamma(\alpha,(1-\lambda)\alpha)}{\Gamma(\alpha)}
-
{1\over \alpha}
\int_{(1-\lambda)\alpha}^\infty
{\Gamma(\alpha, t)\over\Gamma(\alpha)} \,
{\Gamma\left(\alpha,{ t-(1-\lambda)\alpha\over\lambda}\right)\over\Gamma(\alpha)}	\,
{\rm d}t,
\]
where $\Gamma (s,x)=\int _{x}^{\infty }t^{s-1}\exp\{-t\}{\rm d}t$ is the upper incomplete gamma function and $\Gamma(s)=\lim_{x\to 0^+}\Gamma (s,x)$ is the (complete) gamma function.
\end{proposition}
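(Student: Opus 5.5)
The plan is to specialize the first representation in Proposition~\ref{prop-in} to the gamma law. All the work then reduces to one explicit integral, evaluated by integration by parts.

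\textbf{Step 1 (reduction to unit scale).} The definition \eqref{eq:index} is invariant under $X\mapsto cX$ for $c>0$: $\mu$, $X_1$ and $X_2$ all scale by $c$, and the factor $c$ cancels between numerator and denominator. Whatever the parametrization of $\beta$, I may therefore assume $X\sim\text{Gamma}(\alpha,1)$. Then $\mu=\alpha$ and $\overline{F}_X(t^-)=\overline{F}_X(t)=\Gamma(\alpha,t)/\Gamma(\alpha)$, since the distribution is continuous.

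\textbf{Step 2 (distribution of the shifted variable).} Set $Y=\lambda X_2+(1-\lambda)\alpha$ and $c=(1-\lambda)\alpha$. Since $\lambda>0$ and $X_2\geqslant 0$, we have $Y\geqslant c$ almost surely, so $\overline{F}_Y(t^-)=1$ for $t\le c$. For $t>c$,
\[
\overline{F}_Y(t^-)=\mathbb{P}\big(X_2>(t-c)/\lambda\big)=\frac{\Gamma\big(\alpha,(t-c)/\lambda\big)}{\Gamma(\alpha)}.
\]
Splitting the integral in Proposition~\ref{prop-in} at $t=c$ gives
\[
I_\lambda=1-\frac{1}{\alpha}\int_0^{c}\frac{\Gamma(\alpha,t)}{\Gamma(\alpha)}\,{\rm d}t-\frac{1}{\alpha}\int_{c}^\infty\frac{\Gamma(\alpha,t)}{\Gamma(\alpha)}\frac{\Gamma\big(\alpha,\frac{t-c}{\lambda}\big)}{\Gamma(\alpha)}\,{\rm d}t .
\]
The last term is exactly the integral in the statement, so it is left untouched.

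\textbf{Step 3 (the finite integral), the only real computation.} Integrate by parts using $\frac{{\rm d}}{{\rm d}t}\Gamma(\alpha,t)=-t^{\alpha-1}e^{-t}$:
\[
\int_0^c\Gamma(\alpha,t)\,{\rm d}t=c\,\Gamma(\alpha,c)+\int_0^c t^{\alpha}e^{-t}\,{\rm d}t=c\,\Gamma(\alpha,c)+\Gamma(\alpha+1)-\Gamma(\alpha+1,c).
\]
Then apply the recurrence $\Gamma(\alpha+1,c)=\alpha\Gamma(\alpha,c)+c^\alpha e^{-c}$ and $\Gamma(\alpha+1)=\alpha\Gamma(\alpha)$. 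Dividing by $\alpha\Gamma(\alpha)$ and using $c/\alpha=1-\lambda$, the middle term becomes
\[
1-\lambda\frac{\Gamma(\alpha,c)}{\Gamma(\alpha)}-\frac{c^\alpha e^{-c}}{\alpha\Gamma(\alpha)}.
\]
Subtracting this from $1$ and writing $c^\alpha/\alpha=(1-\lambda)^\alpha\alpha^{\alpha-1}$ yields the first two terms of the claimed formula.

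The only points needing care are the boundary bookkeeping at $t=c$ and confirming that the left limits $t^-$ are harmless. Both are immediate because $X$ and $Y$ have continuous distributions when $\lambda>0$.
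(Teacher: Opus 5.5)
Your proposal is correct and follows the paper's own argument: specialize Proposition~\ref{prop-in}, split the integral at $(1-\lambda)\mu$ (below which $\overline{F}_{\lambda X_2+(1-\lambda)\mu}(t^-)=1$), and evaluate the finite piece through the closed form of $\int_0^{(1-\lambda)\alpha}\Gamma(\alpha,t)\,{\rm d}t$. The differences are cosmetic: you normalize to $\beta=1$ at the start rather than substituting $t\mapsto\beta t$ at the end, and you derive that closed form by integration by parts and the recurrence $\Gamma(\alpha+1,c)=\alpha\Gamma(\alpha,c)+c^\alpha e^{-c}$, whereas the paper only states it.
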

\begin{proof}
%When $\lambda=0$, $I_0=H$ (see Section \ref{ssection:2}). For gamma populations, it is well known that \cite[Proposition 2.4]{Vila-Saulo2026}
%	\begin{align*}
%	H=\frac{\alpha^{\alpha-1}\exp\{-\alpha\}}{\Gamma(\alpha)}.
%	\end{align*}
%	
%We now consider the case $0<\lambda\leqslant 1$.
Since $X \sim \mathrm{Gamma}(\alpha,\beta)$, we have $\overline{F}_X(x)=\Gamma(\alpha,\beta x)/\Gamma(\alpha)$,
\begin{align*}
	\overline{F}_{\lambda X_2+(1-\lambda)\mu}(t)
	=
	\begin{cases}
	1, & t\leqslant (1-\lambda)\mu,
	\\
	\overline{F}_X\left({t-(1-\lambda)\mu\over\lambda}\right), & t> (1-\lambda)\mu,
	\end{cases}
\end{align*}
and
$\mu=\mathbb{E}[X]={\alpha}/{\beta}$. 
Applying Proposition \ref{prop-in}  and the scale property of the gamma distribution, we have
\begin{align*}
	I_\lambda
	=
	1
	-
	{\beta\over \alpha}
	\left[
	\int_0^{(1-\lambda)\alpha/\beta}
	{\Gamma(\alpha, \beta t)\over\Gamma(\alpha)}	\,
	{\rm d}t
	+
	\int_{(1-\lambda)\alpha/\beta}^\infty
	{\Gamma(\alpha, \beta t)\over\Gamma(\alpha)} \,
	{\Gamma\left(\alpha,{\beta t-(1-\lambda)\alpha\over\lambda}\right)\over\Gamma(\alpha)}	\,
	{\rm d}t
	\right].
\end{align*}

Finally, applying the change of variables $t \mapsto \beta t$ together with the identity 
\[
\int_0^{(1-\lambda)\alpha} 
\frac{\Gamma(\alpha,t)}{\Gamma(\alpha)}\,{\rm d}t
=
\alpha - \frac{(1-\lambda)^\alpha\alpha^\alpha \exp\{-(1-\lambda)\alpha\} }{\Gamma(\alpha)}
-
\alpha\lambda\, \frac{\Gamma(\alpha,(1-\lambda)\alpha)}{\Gamma(\alpha)},
\]
the proof follows.
\end{proof}

\begin{remark}
Note that $I_\lambda$ is scale invariant and therefore does not depend on the rate parameter $\beta$ (see Proposition \ref{prop-Hoover-index}).
\end{remark}

\begin{remark}\label{rem-H}
Since 
\[
\lim_{\lambda\to 0^+}
\mathbf{1}_{\{t>(1-\lambda)\alpha\}} \, 
\frac{\Gamma\left(\alpha,\frac{t-(1-\lambda)\alpha}{\lambda}\right)}{\Gamma(\alpha)}
=
\begin{cases}
	1, & t=\alpha,\\
	0, & t\neq\alpha,
\end{cases}
\]
it follows from Proposition \ref{prop-Hoover-index} that
\[
H
=
\lim_{\lambda\to 0^+}
I_\lambda
=
\frac{\alpha^{\alpha-1}\exp\{-\alpha\}}{\Gamma(\alpha)}.
\]
The above result was recently established as Proposition 2.4 in \cite{Vila-Saulo2026}.
\end{remark}

\begin{remark}\label{rem-G}
	Letting $\lambda\to 1$ in Proposition \ref{prop-Hoover-index}, we obtain
	\begin{align}\label{ide-last}
	G
	=
	\lim_{\lambda\to 1}
	I_\lambda
	=
	1
	-
	{1\over \alpha}
	\int_0^\infty
	{\Gamma^2(\alpha, t)\over\Gamma^2(\alpha)}	\,
	{\rm d}t
	=
	\frac{\Gamma\left(\alpha+\tfrac12\right)}
	{\sqrt{\pi}\alpha\Gamma(\alpha)}.
		\end{align}
The above result is well documented in the literature; see, for instance, \cite{Baydil2025}.
\end{remark}

\begin{remark}\label{remark:J_index}
%\textbf{Upper bound.}	
	Using the triangle inequality,
	$
	\left|
	(1-\lambda)(X-\mu)+\lambda(X_1-X_2)
	\right|
	\leqslant
	(1-\lambda)|X-\mu|+\lambda|X_1-X_2|.
	$
	Taking expectations yields 
	\begin{align}\label{ineq-H-G}
	I_\lambda\leqslant (1-\lambda)H+\lambda G\equiv J_\lambda.
	\end{align}
The $J_\lambda$ in the right-hand side of inequality \eqref{ineq-H-G} is much easier to manipulate mathematically than $I_\lambda$ defined in \eqref{eq:index}. A natural question, then, is why not use the right-hand side as the index, since it also bridges the Hoover and Gini indices.
In Section \ref{Monte Carlo simulation study}, we will show that the bias of the natural estimator of $I_\lambda$ has a smaller bias than the natural estimator of $J_\lambda$. 
\end{remark}

\begin{remark}
    	The index $I_\lambda$ satisfies the following properties:

\begin{enumerate}
	
	\item %\textbf{Boundary cases.}
	Note that $I_0=H$ and $I_1=G.$
	Thus, the proposed family smoothly connects the Hoover and Gini measures.
	
	\item %\textbf{Non-negativity.} 
%		\item %\textbf{Range.}
	Since $H\leqslant G$, the previous inequality \eqref{ineq-H-G} implies that the index remains bounded
	between zero and the Gini coefficient. That is,
	$
	0\leqslant I_\lambda\leqslant G\leqslant 1.
	$
%	
%    It is clear that
%	$I_\lambda\geqslant 0$. 
	Moreover,
	%\textbf{Perfect equality.}
	if $X=c$ almost surely for some constant $c$, then
	$
	I_\lambda=0.
	$
	
	\item %\textbf{Scale invariance.}
	For any constant $a>0$, let $Y=aX$. Then
	$
	I_\lambda(Y)=I_\lambda(X).
	$
	Hence, the index depends only on relative income differences. Furthermore, 
	%\textbf{Translation behavior.}
	if $Y=X+c$ with $c>0$, then
	$
	I_\lambda(Y)=[{\mu}/({\mu+c})] I_\lambda(X).
	$
	
	\item %\textbf{Continuity in $\lambda$.}
	The function $\lambda\mapsto I_\lambda$ is continuous on $[0,1]$.
	
%	The expression inside the expectation is a continuous function of $\lambda$,
%	and expectation preserves continuity.
	
	\item %\textbf{Symmetry.}	
	Since $X_1$ and $X_2$ are exchangeable, replacing them does not change the
	value of $I_\lambda$. In other words, the index depends only on the distribution of $X$.
	
	\item %\textbf{Covariance representation.} By using the identity
	Since
	$\mathbb{E}\vert Y\vert=2\textrm{Cov}(Y,\mathbf{1}_{\{Y>0\}})$ 
	for any centered variable $Y$, with $Y=(1-\lambda)(X_1-\mu)+\lambda(X_1-X_2)$, we obtaing
	\[
	I_\lambda
	=
	\frac{1}{\mu}\,
	\mathrm{Cov}\left(
	(1-\lambda)(X_1-\mu)+\lambda(X_1-X_2),
	\mathbf{1}_{\{(1-\lambda)(X_1-\mu)+\lambda(X_1-X_2)>0\}}
	\right).
	\]
	
	\item 
%	The index $I_\lambda$ satisfies the Pigou--Dalton principle of transfers.%
	%
%	\textit{Explanation.}
%	
%	Consider two individuals with incomes $x_i$ and $x_j$ such that
%	$x_i>x_j$. A progressive transfer consists of transferring a small
%	amount $\delta>0$ from the richer individual to the poorer one,
%	producing the new incomes
%	
%	\[
%	x_i' = x_i-\delta, \qquad x_j' = x_j+\delta,
%	\]
%	
%	with $0<\delta<({x_i-x_j})/{2}$ so that the income ranking is preserved.
	%
	If $X'$ denotes the income distribution after the transfer, then
	$
	I_\lambda(X') \leqslant I_\lambda(X)
	$ ({Pigou--Dalton transfer principle}).
%
%	Therefore, the index $I_\lambda$ satisfies the Pigou--Dalton principle of transfers.
%	
%	\textit{Proof.}
%	
%	The transfer reduces the dispersion of the income distribution while
%	keeping the mean unchanged. In particular,
%	$
%	\mathbb{E}(X')=\mathbb{E}(X)=\mu.
%	$
%	
%	Moreover, a progressive transfer decreases both absolute deviations from
%	the mean and pairwise income differences. Hence
%	
%	\[
%	\mathbb{E}|X'-\mu| \leqslant \mathbb{E}|X-\mu|
%	\quad
%	\text{and}
%	\quad
%	\mathbb{E}|X_1'-X_2'| \leqslant \mathbb{E}|X_1-X_2|.
%	\]
%	
%	Since the function
%	$
%	y \mapsto |y|
%	$
%	is convex, the quantity
%	
%	\[
%	\left|
%	(1-\lambda)(X-\mu)+\lambda(X_1-X_2)
%	\right|
%	\]
%	
%	is reduced under a mean-preserving contraction of the distribution.
%	
%	Taking expectations yields
%	
%	\[
%	I_\lambda(X')
%	=
%	\frac{1}{2\mu}\,
%	\mathbb{E}
%	\left|
%	(1-\lambda)(X'-\mu)+\lambda(X_1'-X_2')
%	\right|
%	\leqslant
%	I_\lambda(X).
%	\]
%	
%	Therefore, a progressive transfer from a richer individual to a poorer
%	individual decreases the value of the index, which establishes that
%	$I_\lambda$ satisfies the Pigou--Dalton transfer principle.
\end{enumerate}

\end{remark}

\section{An auxiliary lemma}\label{An auxiliary lemma}

The following  auxiliary lemma will be instrumental in deriving the results of Section \ref{Bias}.
\begin{lemma}\label{lemma-main}
If $W$, $Y$, and $Z$ are independent, non-degenerate, nonnegative random variables, then for 
$a,b,c \geqslant 0$ and $z>0$,
\begin{multline*}
	\mathbb{E}
	\left[
	\left|
	aW+bY-cZ
	\right|
	\exp\{-z(W+Y+Z)\}
	\right]
	\\[0.2cm]
	=
	\mathscr{L}_W(z)\mathscr{L}_Y(z)\mathscr{L}_Z(z)
	\left\{
	a\mathbb{E}[W_z]
	+
	b\mathbb{E}[Y_z]
	+
	c\mathbb{E}[Z_z]
	\right\}
	\pmb{G}(aW_z+bY_z,cZ_z),
\end{multline*}
where $\mathscr{L}_W(z)=\mathbb{E}[\exp\{-zW\}]$ is the Laplace transform of $W$, and for independent random variables $Y_1$ and $Y_2$,
\begin{align}\label{Normalized Mean Absolute Difference}
\pmb{G}(Y_1,Y_2)
=
\frac{\mathbb{E}\lvert Y_1-Y_2\rvert}
{\mathbb{E}[Y_1]+\mathbb{E}[Y_2]}
\end{align}
denotes the \textbf{normalized mean absolute difference}.
In the particular case where $Y_1$ and $Y_2$ are independent copies of a random variable $Y$, 
the quantity $\pmb{G}(Y_1,Y_2)$ reduces to the Gini coefficient $G=G(Y)$ of $Y$.

Furthermore, for each $z>0$, $W_z$ denotes the exponentially tilted (or Esscher-transformed) 
version of $W$ \citep{Butler2007}, whose cumulative distribution function (CDF) is given by
\begin{align}\label{def-F-X}
	F_{W_z}(t)
	=
	\frac{
		\mathbb{E}\left[
		\mathds{1}_{\{0\leqslant W\leqslant t\}}
		\exp\{-zW\}
		\right]
	}
	{\mathscr{L}_W(z)},
	\quad t\geqslant 0.
\end{align}

In the above, we assume that the Laplace transforms and expectations involved exist and are finite.
\end{lemma}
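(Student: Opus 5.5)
The plan is a change of measure. For independent $W$, $Y$, $Z$, the factor $\exp\{-z(W+Y+Z)\}$ factorizes as $e^{-zW}e^{-zY}e^{-zZ}$, so the weighted joint law is a product. After normalization, each factor is exactly the exponentially tilted law given in \eqref{def-F-X}.

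Concretely, \eqref{def-F-X} means $\mathrm{d}F_{W_z}(w)=e^{-zw}\,\mathrm{d}F_W(w)/\mathscr{L}_W(z)$, and similarly for $Y_z$ and $Z_z$. Take $W_z, Y_z, Z_z$ independent. Then, for any nonnegative measurable $h$, Tonelli's theorem gives
\[
\mathbb{E}\big[h(W,Y,Z)\,e^{-z(W+Y+Z)}\big]
=\mathscr{L}_W(z)\mathscr{L}_Y(z)\mathscr{L}_Z(z)\,\mathbb{E}\big[h(W_z,Y_z,Z_z)\big].
\]
To prove this, I would first check it on indicator functions of product sets $h=\mathbf{1}_{A\times B\times C}$, where it is immediate from independence and the definition of the tilted CDFs. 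It then extends to all nonnegative $h$ by the usual monotone-class and monotone-convergence argument, which is routine. Applying this with $h(w,y,z')=|aw+by-cz'|$ gives
\[
\mathbb{E}\big[|aW+bY-cZ|\,e^{-z(W+Y+Z)}\big]=\mathscr{L}_W(z)\mathscr{L}_Y(z)\mathscr{L}_Z(z)\,\mathbb{E}|aW_z+bY_z-cZ_z|.
\]

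Next, set $Y_1=aW_z+bY_z$ and $Y_2=cZ_z$. These are independent, being measurable functions of disjoint independent blocks. By the definition \eqref{Normalized Mean Absolute Difference},
\[
\mathbb{E}|Y_1-Y_2|=\big(\mathbb{E}[Y_1]+\mathbb{E}[Y_2]\big)\,\pmb{G}(Y_1,Y_2),
\]
and linearity gives $\mathbb{E}[Y_1]+\mathbb{E}[Y_2]=a\mathbb{E}[W_z]+b\mathbb{E}[Y_z]+c\mathbb{E}[Z_z]$. Substituting this into the previous display yields the claimed identity.

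The only delicate point is that $\pmb{G}$ must be well defined, which requires $\mathbb{E}[Y_1]+\mathbb{E}[Y_2]\in(0,\infty)$.
\begin{itemize}
\item \emph{Finiteness.} This is covered by the standing assumption that the expectations exist and are finite. For $z>0$ it also holds automatically, since $we^{-zw}\le 1/(ez)$ gives $\mathbb{E}[W_z]\le 1/(ez\,\mathscr{L}_W(z))<\infty$.
\item \emph{Positivity.} Because $W$ is non-degenerate and nonnegative, it puts mass on $(0,\infty)$. The tilted law is mutually absolutely continuous with the original law, since the density $e^{-zw}/\mathscr{L}_W(z)$ is strictly positive. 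Hence $\mathbb{E}[W_z]>0$, and likewise $\mathbb{E}[Y_z]>0$ and $\mathbb{E}[Z_z]>0$.
\item \emph{Degenerate case $a=b=c=0$.} Then the denominator vanishes, but both sides are zero. I would handle this by the convention that $0\cdot\pmb{G}=0$, i.e.\ by interpreting the right-hand side as $\mathscr{L}_W\mathscr{L}_Y\mathscr{L}_Z\,\mathbb{E}|Y_1-Y_2|$.
\end{itemize}
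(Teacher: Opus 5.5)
Your proposal is correct and is essentially the paper's proof: the paper also uses the change-of-measure identity $\mathbb{E}[g(W,Y,Z)\exp\{-z(W+Y+Z)\}]=\mathscr{L}_W(z)\mathscr{L}_Y(z)\mathscr{L}_Z(z)\,\mathbb{E}[g(W_z,Y_z,Z_z)]$ with $g(w,y,z')=|aw+by-cz'|$, and then applies the definition of $\pmb{G}$ to the independent pair $aW_z+bY_z$ and $cZ_z$. Your additional checks make explicit what the paper leaves implicit: the monotone-class justification for nonnegative $h$, the finiteness and strict positivity of $a\mathbb{E}[W_z]+b\mathbb{E}[Y_z]+c\mathbb{E}[Z_z]$, and the convention for $a=b=c=0$.
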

\begin{proof}
Since $W$, $Y$, and $Z$ are independent nonnegative random variables, for any 
Borel-measurable function $g:(0,\infty)^3\to\mathbb{R}$ it follows from the 
definition of the exponentially tilted distribution that
\[
\mathbb{E}[g(W_z,Y_z,Z_z)]
=
\frac{\mathbb{E}\left[g(W,Y,Z)\exp\{-z(W+Y+Z)\}\right]}
{\mathscr{L}_W(z)\mathscr{L}_Y(z)\mathscr{L}_Z(z)} .
\]
Hence,
\begin{align*}\label{iid-1}
	\mathbb{E}
	\left[
	\left|
	aW+bY-cZ
	\right|
	\exp\{-z(W+Y+Z)\}
	\right]
	=
	\mathscr{L}_W(z)\mathscr{L}_Y(z)\mathscr{L}_Z(z)\,
	\mathbb{E}
	\left|
	aW_z+bY_z-cZ_z
	\right|.
\end{align*}
Since $aW_z+bY_z$ and $cZ_z$ are independent, the result follows directly from the definition of the normalized mean absolute difference between $aW_z+bY_z$ and $cZ_z$.
\end{proof}
%\begin{lemma}\label{lemma-main-0}
%Under the assumptions of Lemma \ref{lemma-main}, it holds that
%	\begin{multline*}
%		\mathbb{E}
%	\left[
%	\left|
%	aW+bY-cZ
%	\right|
%	\exp\{-z(W+Y+Z)\}
%	\right]
%	\\[0,2cm]
%	=
%	\mathscr{L}_W(z)\mathscr{L}_Y(z)\mathscr{L}_Z(z)
%\left\{	
%a\mathbb{E}[W_z]
%+
%b\mathbb{E}[Y_z]
%+
%c\mathbb{E}[Z_z]
%-
%2
%\int_0^{\infty}
%\overline{F}_{aW_z+bY_z}(t^-)
%\overline{F}_{cZ_z}\left(t^-\right)
%{\rm d}t
%\right\}.
%	\end{multline*}
%\end{lemma}
%

\section{Expectation of the estimator $\widehat{I}_\lambda$}\label{Bias}

In this section we derive simple expressions for the expected value of the estimator $\widehat{I}_\lambda$ for general non-negative-support populations (Theorem \ref{main-theorem}). This result allows us to derive the estimator's bias in the gamma case (see Corollary \ref{main-corollary} in Section \ref{Bias of the estimator}).

Let $X_1,\dots,X_n$ be a sample from the distribution of $X$ with sample mean
\[
\overline{X}=\frac{1}{n}\sum_{i=1}^n X_i .
\]

The plug-in estimator of $I_\lambda$ is obtained by replacing $\mu$ with
$\overline{X}$ and the expectation with the corresponding sample average. Thus, for $\lambda\in[0,1]$,
\begin{align}\label{def-I-lambda}
\widehat I_\lambda
=
\frac{1}{2n(n-1) \overline{X}}
\sum_{1\leqslant i\neq j\leqslant n}
|
(1-\lambda)(X_i-\overline{X})+\lambda(X_i-X_j)
|
\,
\mathds{1}_{\{\sum_{i=1}^{n}X_i>0\}}.
\end{align}

For $\lambda=0$, this estimator reduces to the Hoover estimator \citep{Vila-Saulo2026}
\begin{equation}\label{eq:Hoover_estimator}
    \widehat I_0
\equiv
\widehat{H}
=
\frac{1}{2n \overline{X}}
\sum_{i=1}^n |X_i-\overline{X}|
\,  \mathds{1}_{\{\sum_{i=1}^{n}X_i>0\}},
\end{equation}
while for $\lambda=1$ it becomes the upward-adjusted Gini estimator \citep{Deltas2003}
\begin{equation}\label{eq:Gini_estimator}
    \widehat I_1
\equiv
\widehat{G}
=
\frac{1}{n(n-1) \overline{X}}
\sum_{1\leqslant i<j\leqslant n}|X_i-X_j| 
\,  \mathds{1}_{\{\sum_{i=1}^{n}X_i>0\}}.
\end{equation}

\begin{remark}\label{remark:J_index-1}
		Using the triangle inequality,
	$
	|
	(1-\lambda)(X_i-\overline{X})+\lambda(X_i-X_j)
	|
	\leqslant
	(1-\lambda)|X_i-\overline{X}|+\lambda|X_i-X_j|.
	$
Summing over all \( i, j \) such that \( 1 \leqslant i \neq j \leqslant n \), and then multiplying by \( \big[2n(n-1)\,\overline{X}\big]^{-1} \), yields
	\begin{align}\label{ineq-H-G-1}
	\widehat I_\lambda\leqslant (1-\lambda)\widehat{H}+\lambda \widehat{G} \equiv 	\widehat J_\lambda.
	\end{align}
\end{remark}

\begin{theorem}\label{main-theorem}
	Let $X_1, X_2, \ldots$ be independent copies of a non-negative and non-degenerate random variable $X$ with finite mean $\mathbb{E}[X]=\mu>0$. The following holds:
	\begin{align*}
	\mathbb{E}\left[\widehat{I}_\lambda\right]
=
n\left(1+{\lambda-1\over n}\right)
	\int_{0}^{\infty}
\mathscr{L}_X^n(z)
\mathbb{E}[X_z]
\pmb{G}\left({1-\lambda\over n} \sum_{k=3}^{n}(X_k)_z+\left[\lambda+{1-\lambda\over n}\right](X_2)_z, \left[1-{1-\lambda\over n}\right](X_1)_z\right)
{\rm d}z,
	\end{align*}
where $\mathscr{L}_X(z)=\mathbb{E}\left[\exp\{-zX\}\right]$ denotes the Laplace transform of $X$, and $\pmb{G}(Y_1,Y_2)$ is the normalized mean absolute difference between $Y_1$ and $Y_2$ stated in \eqref{Normalized Mean Absolute Difference}. For each $z>0$, $F_{X_z}$ is the CDF of the exponentially tilted random variable $X_z$ defined in \eqref{def-F-X}.
Moreover, throughout the above derivations, it is implicitly assumed that all Lebesgue--Stieltjes and improper integrals involved are well defined.
\end{theorem}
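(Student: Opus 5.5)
The plan is to remove the random denominator $\overline{X}$ with the Laplace-type identity
\[
\frac{1}{s}=\int_0^\infty e^{-zs}\,{\rm d}z,\qquad s>0,
\]
and then evaluate each resulting summand with Lemma \ref{lemma-main}.

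\textbf{Step 1: rewrite the estimator.} Write $S=\sum_{k=1}^n X_k$, so $\overline{X}=S/n$. The summand can be rearranged as
\[
(1-\lambda)(X_i-\overline{X})+\lambda(X_i-X_j)
=\Big[1-\tfrac{1-\lambda}{n}\Big]X_i-\Big[\lambda+\tfrac{1-\lambda}{n}\Big]X_j-\tfrac{1-\lambda}{n}\sum_{k\neq i,j}X_k .
\]
The pairs $(i,j)$ are exchangeable and there are $n(n-1)$ of them. The prefactor $\frac{n}{2n(n-1)S}$ then gives
\[
\mathbb{E}[\widehat I_\lambda]=\frac{n}{2}\,\mathbb{E}\Big[\frac{|\,\cdot\,|_{(1,2)}}{S}\,\mathds 1_{\{S>0\}}\Big].
\]
I will relabel so that $X_1$ carries the positive coefficient $1-\frac{1-\lambda}{n}$ and $X_2$ carries $\lambda+\frac{1-\lambda}{n}$.

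\textbf{Step 2: pass to the tilted variables.} Insert $1/S=\int_0^\infty e^{-zS}{\rm d}z$ on $\{S>0\}$ and swap expectation and integral by Tonelli (the integrand is nonnegative). The inner expectation is
\[
\mathbb{E}\big[|cX_1-bX_2-a\textstyle\sum_{k\ge3}X_k|\,e^{-z(X_1+\dots+X_n)}\big].
\]
Apply Lemma \ref{lemma-main} in its $n$-variable form with $W+Y$ replaced by the group $\{X_2,\dots,X_n\}$ and $Z=X_1$. The lemma's proof only uses independence and the tilting identity, so it extends verbatim to any finite number of independent variables. This produces
\[
\mathscr L_X^n(z)\Big\{c\,\mathbb{E}[X_z]+b\,\mathbb{E}[X_z]+(n-2)a\,\mathbb{E}[X_z]\Big\}\,\pmb G\Big(a\textstyle\sum_{k\ge 3}(X_k)_z+b(X_2)_z,\;c(X_1)_z\Big),
\]
with the tilted copies $(X_k)_z$ independent.

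\textbf{Step 3: collect constants.} The bracket is
\[
c+b+(n-2)a=2-\tfrac{2(1-\lambda)}{n}+\lambda\cdot 0\ldots
\]
and the key check is that this equals $2\big(1+\frac{\lambda-1}{n}\big)$, up to the sign conventions on the $b$-term. Multiplying by the prefactor $n/2$ then gives the stated factor $n\big(1+\frac{\lambda-1}{n}\big)$.

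\textbf{Main obstacle.} I expect the delicate point to be this bookkeeping. The lemma as stated has the form $aW+bY-cZ$ with all coefficients nonnegative, whereas here the $X_2$ and $X_k$ terms enter with a minus sign and $X_1$ with a plus. Since $|u|=|-u|$, I would rewrite the expression as
\[
\Big(a\textstyle\sum_{k\ge 3}X_k+bX_2\Big)-cX_1,
\]
which fits the lemma's form. Then $\pmb G$'s denominator is $\mathbb{E}[\text{sum}]=\big(c+b+(n-2)a\big)\mathbb{E}[X_z]$. Because the lemma's prefactor multiplies exactly this quantity, the identity is consistent.

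Finally, I must confirm that the event $\{S=0\}$ contributes nothing. This holds because the integrand vanishes there, and it is also harmless when $\mathbb{P}(X=0)>0$.
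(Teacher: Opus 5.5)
Your proposal is correct and follows essentially the same route as the paper: the Laplace identity for $1/S$ with Tonelli, exchangeability to turn the $n(n-1)$ ordered pairs into the factor $n/2$, and the exponential-tilting lemma applied to $\bigl(\tfrac{1-\lambda}{n}\sum_{k\geqslant 3}X_k+[\lambda+\tfrac{1-\lambda}{n}]X_2\bigr)-[1-\tfrac{1-\lambda}{n}]X_1$. The paper uses the three-variable lemma with $W=\sum_{k\geqslant 3}X_k$ together with the fact that tilting commutes with sums of independent variables, which is exactly what your $n$-variable extension encodes. In Step 3 there is no sign caveat at all, since $c+b+(n-2)a=2-\tfrac{2(1-\lambda)}{n}=2\bigl(1+\tfrac{\lambda-1}{n}\bigr)$ exactly, so you should drop the stray ``$+\lambda\cdot 0\ldots$'' and the hedge.
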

\begin{proof}
Using the well-known identity
$
x \int_{0}^{\infty} \exp\{-zx\}\,{\rm d}z = 1, \, x>0,
$
with $x=\sum_{i=1}^{n} X_i$, we obtain
{\small
	\begin{align}\label{eq-1}
		&\mathbb{E}\left[
		\frac{1}{2(n-1)\sum_{i=1}^{n}X_i} 
		\sum_{1\leqslant i\neq j\leqslant n}
		|
		(1-\lambda)(X_i-\overline{X})+\lambda(X_i-X_j)
		|
		\right]
				\nonumber
		\\[0,2cm]
		&=
				\mathbb{E}\left[
		\frac{1}{2(n-1)\sum_{i=1}^{n}X_i}
		\sum_{1\leqslant i\neq j\leqslant n}
\left|
{(1-\lambda)\over {n}}
\sum_{\substack{k=1\\ k\neq i,j}}^n X_k
-
{\left[1-{(1-\lambda)\over n}\right]}
X_i
+
{\left[\lambda+{(1-\lambda)\over n}\right]}
X_j
\right|
		\right]
				\nonumber
\\[0,2cm]
&=
\frac{1}{2(n-1)} \!\!
\sum_{1\leqslant i\neq j\leqslant n} \!\!
			\mathbb{E}\left[
\int_{0}^{\infty}
\left|
{(1-\lambda)\over n} \!\!
\sum_{\substack{k=1\\ k\neq i,j}}^n \! X_k
\!-
\left[1-{(1-\lambda)\over n}\right] X_i
\!+
\left[\lambda+{(1-\lambda)\over n}\right]
X_j
\right|
\exp\left\{-z\left(\sum_{\substack{k=1\\ k\neq i,j}}^n X_k
\!+X_i\!+X_j\right)\right\}
{\rm d}z
		\right]
		\nonumber
\\[0,2cm]
&=
\frac{1}{2(n-1)} \!\!
\sum_{1\leqslant i\neq j\leqslant n} \!
\int_{0}^{\infty} \!\!
\mathbb{E}\!\left[
\left|
{(1-\lambda)\over n} \!\!
\sum_{\substack{k=1\\ k\neq i,j}}^n X_k
\!-
\left[1-{(1-\lambda)\over n}\right] X_i
\!+
\left[\lambda+{(1-\lambda)\over n}\right]
X_j
\right|
\exp\left\{-z\left(\sum_{\substack{k=1\\ k\neq i,j}}^n X_k
\!+X_i\!+X_j\right)\right\}
\right]
{\rm d}z,
	\end{align}
}\noindent
	where in the last equality Tonelli's theorem was used. 
	
Since $X_1,X_2,\ldots$ are independent and identically distributed (i.i.d.), it is clear that, for each $i=1,\ldots,n$, the variables $\sum_{\substack{k=1\\ k\neq i,j}}^n X_k$, $X_i$ and $X_j$ are independent, and
\begin{align*}
	\sum_{\substack{k=1\\ k\neq i,j}}^n X_k\stackrel{d}{=}\sum_{k=3}^n X_k,
\end{align*}
where $\stackrel{d}{=}$ denotes equality in distribution. Therefore, from \eqref{eq-1} we have
	\begin{multline}\label{pre-id-0}
			\mathbb{E}\left[
	\frac{1}{2(n-1)\sum_{i=1}^{n}X_i} 
	\sum_{1\leqslant i\neq j\leqslant n}
	|
	(1-\lambda)(X_i-\overline{X})+\lambda(X_i-X_j)
	|
	\right]
	\\[0,2cm]
	=
	\frac{n}{2} 
	\int_{0}^{\infty}
	\mathbb{E}\left[
	\left|
	{(1-\lambda)\over n}
	\sum_{k=3}^n X_k
	-
	\left[1-{(1-\lambda)\over n}\right] X_1
	+
	\left[\lambda+{(1-\lambda)\over n}\right] X_2
	\right|
	\exp\left\{-z\left(\sum_{k=3}^n X_k
	+X_1+X_2\right)\right\}
	\right]
	{\rm d}z.
	\end{multline}
	
	By using Lemma \ref{lemma-main} with 
	\begin{align}\label{notations}
	W=\sum_{k=3}^n X_k, \quad 
	Y=X_1, \quad 
	Z=X_2, \quad 
	a={(1-\lambda)\over n}, \quad 
	b=1-{(1-\lambda)\over n}, \quad 
	c=\lambda+{(1-\lambda)\over n},
	\end{align}
	we get
	\begin{align}\label{def-exp}
&\mathbb{E}\left[
\left|
{(1-\lambda)\over n}
\sum_{k=3}^n X_k
-
\left[1-{(1-\lambda)\over n}\right] X_1
+
\left[\lambda+{(1-\lambda)\over n}\right] X_2
\right|
\exp\left\{-z\left(\sum_{k=3}^n X_k
+X_1+X_2\right)\right\}
\right]
\nonumber
\\[0,2cm]
&=
\mathscr{L}_X^n(z)
\left\{
{(1-\lambda)\over n} \, 
\mathbb{E}\left[\left(\sum_{k=3}^n X_k\right)_z\right]
+
(\lambda+1)
\mathbb{E}[X_z]
\right\}
\nonumber
\\[0,2cm]
&
\times 
\pmb{G}\left({(1-\lambda)\over n}
\left(\sum_{k=3}^n X_k\right)_z+\left[\lambda+{(1-\lambda)\over n}\right] (X_2)_z, \left[{1-{(1-\lambda)\over n}}\right](X_1)_z\right).
	\end{align}		
%	{\small
%				\begin{align}\label{def-exp}
%			&\mathbb{E}\left[
%			\left|
%			{(1-\lambda)\over n}
%			\sum_{k=3}^n X_k
%			-
%			\left[1-{(1-\lambda)\over n}\right] X_1
%			+
%			\left[\lambda+{(1-\lambda)\over n}\right] X_2
%			\right|
%			\exp\left\{-z\left(\sum_{k=3}^n X_k
%			+X_1+X_2\right)\right\}
%			\right]
%			\nonumber
%			\\[0,2cm]
%			&=
%			\mathscr{L}_X^n(z)
%			\left\{
%			{(1-\lambda)\over n} \, 
%			\mathbb{E}\left[\left(\sum_{k=3}^n X_k\right)_z\right]
%			+
%			(\lambda+1)
%			\mathbb{E}[X_z]
%			-
%			2
%						\int_0^{\infty}
%			\overline{F}_{{(1-\lambda)\over n}
%				(\sum_{k=3}^n X_k)_z+\left[\lambda+{(1-\lambda)\over n}\right] (X_2)_z}(t^-)
%			\overline{F}_{\left[{1-{(1-\lambda)\over n}}\right]X_z}\left(t^-\right)
%			{\rm d}t
%			\right\}.
%			\end{align}
%		} \noindent
			Since $X_1,X_2,\ldots$ are i.i.d. with the same distribution as $X$, we have
			\begin{align}\label{pre-id-z}
				\left(\sum_{k=3}^n X_k\right)_z
				\stackrel{d}{=}
				\sum_{k=3}^n (X_k)_z,
				\quad 
			\mathbb{E}\left[\left(\sum_{k=3}^n X_k\right)_z\right]
			=
			(n-2)\mathbb{E}[X_z].
			\end{align}

Finally, the proof follows by combining \eqref{pre-id-0}, \eqref{def-exp}, and \eqref{pre-id-z}, along with the definition \eqref{def-I-lambda} of $\widehat I_\lambda$.
%
%Finally, using the representation
%\[
%G(Y)=\frac{\int_{0}^{\infty}F_Y(t^-)[1-F_Y(t^-)]\,{\rm d}t}{\mathbb{E}[Y]},
%\]
%of the Gini coefficient of a random variable $Y$, the proof of the theorem follows.
\end{proof}

\begin{corollary}\label{main-corollary-1}
	Under the assumptions of Theorem \ref{main-theorem}, it holds that
		\begin{align*}
		\mathbb{E}\left[\widehat{I}_\lambda\right]
		=
		\int_{0}^{\infty}
		\mathscr{L}_X^n(z)
		\left\{
		(n-1+\lambda)
		\mathbb{E}\left[X_z\right]
		-
		\int_0^{\infty}
		\overline{F}_{{(1-\lambda)}
			(\sum_{k=2}^n X_k)_z+n\lambda(X_2)_z}(t^-)
		\overline{F}_{(n-1+\lambda)X_z}\left(t^-\right)
		{\rm d}t
		\right\}
		{\rm d}z.
	\end{align*}
\end{corollary}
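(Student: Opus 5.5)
The plan is to start from Theorem \ref{main-theorem} and rewrite the normalized mean absolute difference $\pmb{G}$ through the survival-function identity that was used in the proof of Proposition \ref{prop-in}. For independent non-negative $Y_1,Y_2$, identity \eqref{id-fund}, Tonelli's theorem and \eqref{id-exp-posit} give
\begin{align*}
\mathbb{E}|Y_1-Y_2|
=
\mathbb{E}[Y_1]+\mathbb{E}[Y_2]
-
2\int_0^\infty \overline{F}_{Y_1}(t^-)\overline{F}_{Y_2}(t^-)\,{\rm d}t .
\end{align*}
Hence $\{\mathbb{E}[Y_1]+\mathbb{E}[Y_2]\}\,\pmb{G}(Y_1,Y_2)$ is this same quantity. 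In Theorem \ref{main-theorem}, the prefactor multiplying $\pmb{G}$ is $n(1+(\lambda-1)/n)\mathbb{E}[X_z]=(n-1+\lambda)\mathbb{E}[X_z]$. This is not literally $\mathbb{E}[Y_1]+\mathbb{E}[Y_2]$ for the two arguments of $\pmb{G}$, so I would go back one step to \eqref{pre-id-0} and \eqref{def-exp}. There the factor $\mathscr{L}_X^n(z)\{a\mathbb{E}[W_z]+b\mathbb{E}[Y_z]+c\mathbb{E}[Z_z]\}\pmb{G}$ equals $\mathscr{L}_X^n(z)\,\mathbb{E}|aW_z+cZ_z-bY_z|$ by the proof of Lemma \ref{lemma-main}.

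With $a=(1-\lambda)/n$, $b=1-(1-\lambda)/n$ and $c=\lambda+(1-\lambda)/n$, the sum of means is
\[
a(n-2)\mathbb{E}[X_z]+(b+c)\mathbb{E}[X_z]=\frac{2(n-1+\lambda)}{n}\,\mathbb{E}[X_z].
\]
Multiplying by the outer factor $n/2$ from \eqref{pre-id-0} therefore produces exactly $(n-1+\lambda)\mathbb{E}[X_z]\mathscr{L}_X^n(z)$, the first term. The cross term becomes
\[
-\,n\int_0^\infty \overline{F}_{aW_z+cZ_z}(t^-)\,\overline{F}_{bY_z}(t^-)\,{\rm d}t .
\]

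Next I would rescale: substitute $t=s/n$. This turns $\overline{F}_{aW_z+cZ_z}(s/n^-)$ into $\overline{F}_{(1-\lambda)W_z+(n\lambda+1-\lambda)Z_z}(s^-)$ and $\overline{F}_{bY_z}(s/n^-)$ into $\overline{F}_{(n-1+\lambda)Y_z}(s^-)$. The Jacobian $1/n$ cancels the factor $n$, giving coefficient $-1$ on the inner integral. Finally, I would regroup $(1-\lambda)W_z+(1-\lambda)Z_z+n\lambda Z_z$. Using \eqref{pre-id-z}, $W_z+(X_2)_z\stackrel{d}{=}\sum_{k=2}^n (X_k)_z$ (with $W$ indexed over $k=3,\dots,n$), and the remaining piece is $n\lambda (X_2)_z$. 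I would then note that the corollary's notation $(1-\lambda)(\sum_{k=2}^nX_k)_z+n\lambda(X_2)_z$ must be read as this combined variable, where the $X_2$ term is the same copy and not an independent one. Writing it as $(1-\lambda)\sum_{k=3}^n(X_k)_z+(n\lambda+1-\lambda)(X_2)_z$ makes it unambiguous, and only its survival function enters.

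The main obstacle is the bookkeeping of constants: confirming that the theorem's prefactor $(n-1+\lambda)\mathbb{E}[X_z]$ really is $(n/2)$ times the sum of the means, and that the $t\mapsto t/n$ rescaling yields coefficient exactly $1$ on the double integral. A second point to check is that the joint survival-function term is interpreted with the $X_2$ contributions merged. Finiteness of all the integrals is assumed, as in Theorem \ref{main-theorem}.
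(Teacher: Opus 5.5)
Your proposal is correct and follows the paper's own route: express $\mathbb{E}|Y_1-Y_2|$ (hence $\pmb{G}$) through \eqref{id-fund}, Tonelli's theorem and \eqref{id-exp-posit}, then substitute into Theorem \ref{main-theorem}. You also make explicit the bookkeeping the paper leaves implicit: the factor $\tfrac{n}{2}\cdot\tfrac{2(n-1+\lambda)}{n}=n-1+\lambda$; the rescaling $t\mapsto s/n$ that produces the coefficients $1-\lambda$, $1+(n-1)\lambda$ and $n-1+\lambda$; and the correct pairing of $aW_z+cZ_z$ against $bY_z$ (the paper's proof writes $aW_z+bY_z$ against $cZ_z$, a labeling slip relative to \eqref{notations} that your version avoids).
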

\begin{proof} For simplicity in the proof of this result, we adopt the notation for $W$, $Y$, $Z$, $a$, $b$, and $c$ introduced in \eqref{notations}.
	
	Using the identity \eqref{id-fund} together with the fact that $aW_z+bY_z$ and $cZ_z$ remain independent non–negative random variables, Tonelli's theorem yields
	\begin{align*}
		\mathbb{E}
		\left|
		aW_z+bY_z-cZ_z
		\right|
		=
		\int_0^{\infty}
		\left[
		\overline{F}_{aW_z+bY_z}(t^-)
		+
		\overline{F}_{cZ_z}\left(t^-\right)
		-
		2\,\overline{F}_{aW_z+bY_z}(t^-)
		\overline{F}_{cZ_z}\left(t^-\right)
		\right]
		{\rm d}t.
	\end{align*}
	
	Using 
	%the identity
	%$$
	%p+q-2pq=(1-p)+(1-q)-2(1-p)(1-q), \quad p,q\in[0,1],
	%$$
	%and 
	the well known representation \eqref{id-exp-posit},
	the previous identity can be written as
	\begin{align*}
		\mathbb{E}
		\left|
		aW_z+bY_z-cZ_z
		\right|
		=
		a\mathbb{E}[W_z]
		+
		c\mathbb{E}[Y_z]
		+
		b\mathbb{E}[Z_z]
		-
		2
		\int_0^{\infty}
		\overline{F}_{aW_z+bY_z}(t^-)
		\overline{F}_{cZ_z}\left(t^-\right)
		{\rm d}t.
	\end{align*}
	Hence,
	\begin{align}\label{iid-2}
		\pmb{G}(aW_z+bY_z,cZ_z)
		=
		{	a\mathbb{E}[W_z]
			+
			c\mathbb{E}[Y_z]
			+
			b\mathbb{E}[Z_z]
			-
			2
			\int_0^{\infty}
			\overline{F}_{aW_z+bY_z}(t^-)
			\overline{F}_{cZ_z}\left(t^-\right)
			{\rm d}t
			\over 
			a\mathbb{E}[W_z]
			+
			c\mathbb{E}[Y_z]
			+
			b\mathbb{E}[Z_z]}.
	\end{align}
	
	Combining Theorem \ref{main-theorem} with \eqref{iid-2} completes the proof.
\end{proof}

\begin{remark}
	It is important to note that, depending on the distribution specified for $X$, the Laplace transform in Theorem \ref{main-theorem} and Corollary \ref{main-corollary-1} may constrain the domain over which the improper integral is well-defined.
\end{remark}

\begin{remark}
	When $\lambda\to 0$, Corollary \ref{main-corollary-1} yields
	\begin{align*}
		\mathbb{E}\left[\widehat{H}\right]
		=
		\lim_{\lambda\to 0}
				\mathbb{E}\left[\widehat{I}_\lambda\right]
		=
		\int_{0}^{\infty}
		\mathscr{L}_X^n(z)
		\left\{
		(n-1)\mathbb{E}[X_z]
		-
		\int_0^{\infty}
		\overline{F}_{(\sum_{k=2}^n X_k)_z}(t^-)
		\overline{F}_{(n-1)X_z}(t^-)
		\,{\rm d}t
		\right\}
		{\rm d}z .
	\end{align*}
	An equivalent formula was recently obtained in Theorem~4.1 of \cite{Vila-Saulo2026}.
\end{remark}

\begin{remark}\label{rem-t1}
	Setting  $\lambda\to 1$ in Theorem \ref{main-theorem}, we obtain
		\begin{align*}
		\mathbb{E}\left[\widehat{G}\right]
=
		\lim_{\lambda\to 1}
\mathbb{E}\left[\widehat{I}_\lambda\right]
		=
		n
		\int_{0}^{\infty}
		\mathscr{L}_X^n(z)
		\mathbb{E}[X_z]
		\pmb{G}((X_2)_z, (X_1)_z)
		{\rm d}z.
	\end{align*}
Since $(X_1)_z$ and $(X_2)_z$ are i.i.d., because $X_1$ and $X_2$ are i.i.d. copies of $X$, the normalized mean absolute difference $\pmb{G}((X_2)_z, (X_1)_z)$ reduces to the Gini coefficient $G(X_z)$. Therefore, the above identity can be written as
	\begin{align*}
		\mathbb{E}\left[\widehat{G}\right]
		=
		n
		\int_{0}^{\infty}
		\mathscr{L}_X^n(z)
		\mathbb{E}[X_z]
		G(X_z)
		\,{\rm d}z .
	\end{align*}
	This representation has recently appeared in Theorem~3.2 of \cite{Vila-Saulo2025}.
\end{remark}

\begin{remark}
As an application of Corollary \ref{main-corollary-1}, in the next section we derive a simple expression for $	\mathbb{E}\left[\widehat{I}_\lambda\right]$ in the case of gamma distributions (including the Exponential, Erlang, Chi-square, and Scaled Gamma distributions). The gamma distribution is a natural choice due to its flexibility and the fact that its tilted versions, sums, and differences lead to distributions that are well known in the literature.
\end{remark}

\section{Bias of the estimator $\widehat{I}_\lambda$ under gamma populations}\label{Bias of the estimator}

Let $X_1,X_2,\ldots$ be i.i.d. random variables with
$
X\sim \mathrm{Gamma}(\alpha,\beta),
$
where $\alpha>0$ is the shape parameter and $\beta>0$ is the rate parameter. 
The Laplace transform of $X$ is given by
\[
\mathscr{L}_X(z)
=
\left(\frac{\beta}{\beta+z}\right)^{\alpha}, \quad z>-\beta.
\]

Under exponential tilting with parameter $z>0$, it is well known that
\begin{align}\label{dist-gamma}
	X_z \sim \mathrm{Gamma}(\alpha,\beta+z).
\end{align}
%In particular,
%$
%	\mathbb{E}\left[X_z\right]={\alpha/(\beta+z)}.
%$

\smallskip 
Unless otherwise specified, in the remainder of this section we assume that $\lambda\in[0,1)$.

From \eqref{dist-gamma} and the scale property of the gamma distribution, we obtain
\[
(n-1+\lambda)X_z \sim \mathrm{Gamma}\left(\alpha,{\beta+z\over n-1+\lambda}\right).
\]

Moreover, noting that
\[
\sum_{k=2}^n X_k = \sum_{k=3}^n X_k+X_2,
\]
we obtain
\begin{align}\label{gamma-1}
(1-\lambda)
\left(\sum_{k=2}^n X_k\right)_z+n\lambda(X_2)_z
=
(1-\lambda)\left(\sum_{k=3}^n X_k\right)_z 
+
[1+(n-1)\lambda](X_2)_z.
\end{align}

Since $X_1,X_2,\ldots$ are independent,
\[
\left(\sum_{k=3}^n X_k\right)_z \sim \mathrm{Gamma}((n-2)\alpha,\beta+z),
\quad 
(X_2)_z \sim \mathrm{Gamma}(\alpha,\beta+z).
\]

Hence,
\begin{align}\label{gamma-2}
(1-\lambda)
\left(\sum_{k=3}^n X_k\right)_z \sim \mathrm{Gamma}\left((n-2)\alpha,{\beta+z\over 1-\lambda}\right),
\quad 
[1+(n-1)\lambda](X_2)_z \sim \mathrm{Gamma}\left(\alpha,{\beta+z\over 1+(n-1)\lambda}\right).
\end{align}

\bigskip 
\begin{definition}\label{def-main}
Let $X_1,X_2$ be independent random variables such that
$
X_i \sim \mathrm{Gamma}(\alpha_i,\beta_i), \ \alpha_i>0,\; \beta_i>0,\; i=1,2,
$
where $\alpha_i$ and $\beta_i$ denote the shape and rate parameters, respectively. 
Then the random variable
$
S=X_1+X_2
$
is said to follow a \textbf{generalized hypoexponential distribution} (GHypo) \citep{Moschopoulos1985}, and we write
$
S \sim \textrm{GHypo}(\alpha_1,\beta_1,\alpha_2,\beta_2).
$
	Its PDF is given by 
	\[
	f_{\text{GHypo}(\alpha_1,\beta_1,\alpha_2,\beta_2)}(t)
	=
	\frac{\beta_1^{\alpha_1}\beta_2^{\alpha_2}}
	{\Gamma(\alpha_1+\alpha_2)} \
	t^{\alpha_1+\alpha_2-1}
	\exp\{-\beta_2 t\}
	\,{}_1F_1
	\!\left(\alpha_1;\alpha_1+\alpha_2;(\beta_2-\beta_1)t\right),
	\quad t>0,
	\]
	where ${}_1F_1$ denotes the confluent hypergeometric function of Kummer. Its corresponding CDF is denoted by 
	$F_{\text{GHypo}(\alpha_1,\beta_1,\alpha_2,\beta_2)}$ and can be obtained by using the identity
	\[
	\int_0^t
	x^{\nu-1} \exp\{-\mu x\}
	\,{}_1F_1(a;\nu;cx)\,{\rm d}x
	=
	\frac{t^{\nu}}{\nu} \, 
	\exp\{-\mu t\}
	\,{}_1F_1\!\left(a;\nu+1;(c-\mu)t\right),
	\quad \nu>0,
	\]
	yielding
	\begin{align}\label{CDF-GHypo}
F_{\mathrm{GHypo}(\alpha_1,\beta_1,\alpha_2,\beta_2)}(t)
=
\frac{\beta_1^{\alpha_1}\beta_2^{\alpha_2}}
{\Gamma(\alpha_1+\alpha_2+1)} \,
t^{\alpha_1+\alpha_2}
\exp\{-\beta_2 t\}
\,{}_1F_1
\!\left(
\alpha_1;\alpha_1+\alpha_2+1;(\beta_2-\beta_1)t
\right),
\quad t>0.
	\end{align}
%	\[
%	\mathscr{L}_S(s)
%	=
%	\prod_{i=1}^m
%	\left(\frac{\beta_i}{\beta_i+s}\right)^{\alpha_i},
%	\qquad s>0.
%	\]
\end{definition}

\bigskip 
Combining Definition \ref{def-main} with \eqref{gamma-1} and \eqref{gamma-2}, we obtain
\begin{align*}
	(1-\lambda)
	\left(\sum_{k=2}^n X_k\right)_z+n\lambda(X_2)_z\sim 
	\text{GHypo}\left((n-2)\alpha,{\beta+z\over 1-\lambda},\alpha,{\beta+z\over 1+(n-1)\lambda}\right).
\end{align*}

Applying Corollary \ref{main-corollary-1}, for $\lambda\in[0,1)$, we obtain
	\begin{align}\label{exp-gamma}
	\mathbb{E}\left[\widehat{I}_\lambda\right]
	&=
	\int_{0}^{\infty}
	\left(\frac{\beta}{\beta+z}\right)^{n\alpha}
	\left\{ 
	{(n-1+\lambda)\alpha\over \beta+z}
	-
	\int_0^{\infty}
	\overline{F}_{	\text{GHypo}\left((n-2)\alpha,{\beta+z\over 1-\lambda},\alpha,{\beta+z\over 1+(n-1)\lambda}\right)}(t) \, 
	{\Gamma\left(\alpha,{(\beta+z)t\over n-1+\lambda} \right)\over\Gamma(\alpha)} \, 
	{\rm d}t
	\right\}
	{\rm d}z
	\nonumber
	\\[0,2cm]
	&=
	\int_{0}^{\infty}
	\left(\frac{\beta}{\beta+z}\right)^{n\alpha}
	\left\{ 
	{(n-1+\lambda)\alpha\over \beta+z}
	-
	\int_0^{\infty}
	\overline{F}_{	\text{GHypo}\left((n-2)\alpha,{1\over 1-\lambda},\alpha,{1\over 1+(n-1)\lambda}\right)}((\beta+z)t) \, 
	{\Gamma\left(\alpha,{(\beta+z)t\over n-1+\lambda} \right)\over\Gamma(\alpha)} \,
	{\rm d}t
	\right\}
	{\rm d}z
	\nonumber
		\\[0,2cm]
	&=
%	\int_{0}^{\infty}
%	\frac{\beta^{n\alpha}}{(\beta+z)^{n\alpha+1}}
%	{\rm d}z
\frac{1}{\alpha}
	\left\{ 
	{\left(1+{\lambda-1\over n}\right)\alpha}
	-
	{1\over n}
	\int_0^{\infty}
	\overline{F}_{	\text{GHypo}\left((n-2)\alpha,{1\over 1-\lambda},\alpha,{1\over 1+(n-1)\lambda}\right)}(t) \, 
	{\Gamma\left(\alpha,{t\over n-1+\lambda} \right)\over\Gamma(\alpha)} \,
	{\rm d}t
	\right\},
\end{align}
where in the second equality we used the scale property of the GHypo distribution, and in the third equality the change of variable $t \mapsto (\beta+z)t$ was performed.

\begin{remark}
Because the estimator $\widehat{I}_\lambda$ is scale invariant, its expectation in Item \eqref{exp-gamma} does not depend on the rate parameter $\beta$.
\end{remark}

Therefore, by combining Proposition \ref{prop-Hoover-index} and  \eqref{exp-gamma}, we have:
\begin{corollary}\label{main-corollary}
For $n\geqslant 2$ and $\lambda\in(0,1)$, the bias of $\widehat{I}_\lambda$ relative to $I_\lambda$, denoted by $\text{Bias}(\widehat{I}_\lambda,I_\lambda)$, is given by
	\begin{multline*}
\text{Bias}(\widehat{I}_\lambda,I_\lambda)
		=
\frac{1}{\alpha}
\left\{ 
{\left(1+{\lambda-1\over n}\right)\alpha}
-
{1\over n}
\int_0^{\infty}
\overline{F}_{	\text{GHypo}\left((n-2)\alpha,{1\over 1-\lambda},\alpha,{1\over 1+(n-1)\lambda}\right)}(t) \, 
{\Gamma\left(\alpha,{t\over n-1+\lambda} \right)\over\Gamma(\alpha)} \,
{\rm d}t
\right\}
\\[0,2cm]
		-
		\left\{
\frac{(1-\lambda)^\alpha\alpha^{\alpha-1} \exp\{-(1-\lambda)\alpha\} }{\Gamma(\alpha)}
+
\lambda\, \frac{\Gamma(\alpha,(1-\lambda)\alpha)}{\Gamma(\alpha)}
-
{1\over \alpha}
\int_{(1-\lambda)\alpha}^\infty
{\Gamma(\alpha, t)\over\Gamma(\alpha)} \,
{\Gamma\left(\alpha,{ t-(1-\lambda)\alpha\over\lambda}\right)\over\Gamma(\alpha)}	\,
{\rm d}t
		\right\},
	\end{multline*}
or, equivalently, by using \eqref{CDF-GHypo},
		\begin{multline*}
		\text{Bias}(\widehat{I}_\lambda,I_\lambda)
		=
\frac{
	\left(\frac{1}{1-\lambda}\right)^{(n-2)\alpha}
	\left[\frac{1}{1+(n-1)\lambda}\right]^{\alpha}
}{
	n\Gamma((n-1)\alpha+1)\Gamma(\alpha+1)
} 
		\\[0,2cm]\times 
\int_0^\infty
t^{(n-1)\alpha}
\exp\left\{-\frac{t}{1+(n-1)\lambda}\right\}
\Gamma\left(\alpha,{t\over n-1+\lambda} \right)
\,
{}_1F_1
\!\left(
(n-2)\alpha;
(n-1)\alpha+1;
-\frac{n\lambda t}{(1-\lambda)(1+(n-1)\lambda)}
\right)
{\rm d}t
		\\[0,2cm]
		-
		\left\{
	\frac{(1-\lambda)^\alpha\alpha^{\alpha-1} \exp\{-(1-\lambda)\alpha\} }{\Gamma(\alpha)}
	+
	\lambda\, \frac{\Gamma(\alpha,(1-\lambda)\alpha)}{\Gamma(\alpha)}
	-
	{1\over \alpha}
	\int_{(1-\lambda)\alpha}^\infty
	{\Gamma(\alpha, t)\over\Gamma(\alpha)} \,
	{\Gamma\left(\alpha,{ t-(1-\lambda)\alpha\over\lambda}\right)\over\Gamma(\alpha)}	\,
	{\rm d}t
		\right\}.
	\end{multline*}
\end{corollary}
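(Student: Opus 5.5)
The plan is to compute $\mathbb{E}[\widehat I_\lambda]-I_\lambda$ directly from two results already in hand. The expectation $\mathbb{E}[\widehat I_\lambda]$ for gamma populations is given by the last line of \eqref{exp-gamma}. The population index $I_\lambda$ is given in closed form by Proposition \ref{prop-Hoover-index}, which is valid for $\lambda\in(0,1]$. Subtracting the second from the first gives the first displayed expression of Corollary \ref{main-corollary}. This also explains the restriction to $\lambda\in(0,1)$: \eqref{exp-gamma} needs $\lambda<1$, and Proposition \ref{prop-Hoover-index} needs $\lambda>0$.

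For the equivalent second expression, I will rewrite the $\mathbb{E}[\widehat I_\lambda]$ part as an integral of a CDF rather than a survival function. Put $c=n-1+\lambda$ and $S\sim\mathrm{GHypo}\big((n-2)\alpha,\tfrac{1}{1-\lambda},\alpha,\tfrac{1}{1+(n-1)\lambda}\big)$. The key observation is
\[
\int_0^\infty \frac{\Gamma(\alpha,t/c)}{\Gamma(\alpha)}\,{\rm d}t = c\,\alpha ,
\]
which follows from \eqref{id-exp-posit}, since this is the mean of a $\mathrm{Gamma}(\alpha,1/c)$ variable. Writing $\overline F_S=1-F_S$ in \eqref{exp-gamma} then gives
\[
\frac1\alpha\left\{\frac{c\,\alpha}{n}-\frac1n\Big[c\,\alpha-\int_0^\infty F_S(t)\,\frac{\Gamma(\alpha,t/c)}{\Gamma(\alpha)}\,{\rm d}t\Big]\right\}
=\frac{1}{n\alpha\Gamma(\alpha)}\int_0^\infty F_S(t)\,\Gamma(\alpha,t/c)\,{\rm d}t .
\]
The constant terms cancel exactly, which is why no constant term appears in the second display of the corollary.

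Next I insert the explicit CDF \eqref{CDF-GHypo} with $\alpha_1=(n-2)\alpha$, $\beta_1=1/(1-\lambda)$, $\alpha_2=\alpha$ and $\beta_2=1/(1+(n-1)\lambda)$. This gives $\alpha_1+\alpha_2+1=(n-1)\alpha+1$ and
\[
\beta_2-\beta_1=\frac{(1-\lambda)-(1+(n-1)\lambda)}{(1-\lambda)(1+(n-1)\lambda)}=-\frac{n\lambda}{(1-\lambda)(1+(n-1)\lambda)} .
\]
The prefactor becomes $\beta_1^{(n-2)\alpha}\beta_2^{\alpha}/\Gamma((n-1)\alpha+1)$. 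Combining it with $1/(n\alpha\Gamma(\alpha))=1/(n\Gamma(\alpha+1))$ yields exactly the stated constant, integrand and hypergeometric argument. The term involving $I_\lambda$ is carried over unchanged.

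I expect the main obstacle to be bookkeeping and legitimacy rather than any new idea. First, the splitting $\int\overline F_S\,\Gamma=\int\Gamma-\int F_S\,\Gamma$ requires both integrals to be finite; this holds because $\Gamma(\alpha,t/c)$ decays exponentially and $0\le F_S\le 1$. Second, the boundary case $n=2$ must be checked: there $\alpha_1=0$, so the first summand is degenerate at $0$ and $S$ is simply gamma-distributed. I will check that \eqref{CDF-GHypo} still applies with ${}_1F_1(0;\cdot;\cdot)=1$, giving the CDF of $\mathrm{Gamma}(\alpha,\beta_2)$, so the formula remains valid for all $n\geqslant 2$.
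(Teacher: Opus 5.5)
The first display goes through exactly as you plan, and this is the paper's own route: subtract Proposition \ref{prop-Hoover-index} from the last line of \eqref{exp-gamma}. Your cancellation $\mathbb{E}[\widehat I_\lambda]=\frac{1}{n\alpha\Gamma(\alpha)}\int_0^\infty F_S(t)\,\Gamma(\alpha,t/c)\,{\rm d}t$, with $c=n-1+\lambda$, is also correct.

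The step that fails is inserting \eqref{CDF-GHypo}, because that formula is not the distribution function of $S$. The integral identity quoted in Definition \ref{def-main} to derive it is already false for $a=0$, $\nu=1$: the left side is $(1-e^{-\mu t})/\mu$ and the right side is $te^{-\mu t}$. Let $\widetilde F$ be the right-hand side of \eqref{CDF-GHypo} and $f_S$ the GHypo density, which the paper states correctly. Then $\widetilde{F}'=f_S-\beta_2\widetilde F$ rather than $f_S$, and $\widetilde F(t)\to 0$, not $1$, as $t\to\infty$. Your announced $n=2$ check would have caught this. With ${}_1F_1(0;\cdot;\cdot)=1$, \eqref{CDF-GHypo} reduces to $(\beta_2t)^\alpha e^{-\beta_2t}/\Gamma(\alpha+1)$. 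That is only the prefactor of $1-\Gamma(\alpha,\beta_2t)/\Gamma(\alpha)=\frac{(\beta_2t)^\alpha e^{-\beta_2t}}{\Gamma(\alpha+1)}\,{}_1F_1(1;\alpha+1;\beta_2t)$, so it is not the $\mathrm{Gamma}(\alpha,\beta_2)$ CDF, contrary to what you assert.

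Consequently the second display is not equivalent to the first; it is false. The paper's ``equivalently, by using \eqref{CDF-GHypo}'' rests on the same formula, so that part cannot be proved as stated.

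For $n=2$, $\alpha=1$ you can check this by hand. Here $\widehat I_\lambda=\frac{(1+\lambda)|X_1-X_2|}{2(X_1+X_2)}$ and $X_1/(X_1+X_2)$ is uniform, so $\mathbb{E}[\widehat I_\lambda]=\frac{1+\lambda}{4}$. The first display reproduces this. The first term of the second display instead equals $\frac{1}{2(1+\lambda)}\int_0^\infty te^{-2t/(1+\lambda)}\,{\rm d}t=\frac{1+\lambda}{8}$. In the non-degenerate case $n=3$, $\alpha=1$, $\lambda=\frac12$, it gives $\frac{125}{729}$ instead of the correct $\frac{125}{324}$; the latter can be confirmed by direct Dirichlet integration.

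What you can establish is the first display. If you want a single integral involving ${}_1F_1$, move the integration onto the density. By Tonelli, $\int_0^\infty F_S(t)\,\Gamma(\alpha,t/c)\,{\rm d}t=\int_0^\infty f_S(s)\big[c\,\Gamma(\alpha+1,s/c)-s\,\Gamma(\alpha,s/c)\big]\,{\rm d}s$. The GHypo density of Definition \ref{def-main} can then be substituted; that formula is correct and even covers $n=2$.
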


\begin{remark}
	Since 
	\[
	\lim_{\lambda\to 0^+}
	F_{	\text{GHypo}\left((n-2)\alpha,{1\over 1-\lambda},\alpha,{1\over 1+(n-1)\lambda}\right)}(t)
	=
	F_{
	\mathrm{GHypo}\big((n-2)\alpha,1,\alpha,1\big)}(t)
	=
	F_{
	\mathrm{Gamma}((n-1)\alpha,1)}(t),
	\]
	combining Corollary \ref{main-corollary} with Remark \ref{rem-H} yields
	\begin{multline*}
		\text{Bias}(\widehat{H},H)
		=
			\lim_{\lambda\to 0^+}
\text{Bias}(\widehat{I}_\lambda,I_\lambda)
		\\[0,2cm]
		=
		\frac{1}{\alpha}
		\left\{
		\left(1-\frac{1}{n}\right)\alpha
		-
		\frac{1}{n}
		\int_0^{\infty}
		\frac{\Gamma\left((n-1)\alpha,t\right)}{\Gamma((n-1)\alpha)}
		\,
		\frac{\Gamma\left(\alpha,\frac{t}{n-1}\right)}{\Gamma(\alpha)}
		\,\mathrm{d}t
		\right\}
		-
		\frac{\alpha^{\alpha-1}\exp\{-\alpha\}}{\Gamma(\alpha)}.
	\end{multline*}
	Note that the above identity recently appeared in Corollary 4.5 of \cite{Vila-Saulo2026}.
\end{remark}

\begin{remark}
	Since
	\[
	\lim_{\lambda\to 1^-}
	F_{\text{GHypo}\left((n-2)\alpha,\frac{1}{1-\lambda},\alpha,\frac{1}{1+(n-1)\lambda}\right)}(t)
	=
	F_{{\rm Gamma}(\alpha,{1\over n})}(t),
	\]
	combining Corollary \ref{main-corollary} with Remark \ref{rem-G} gives
	\begin{align*}
	\text{Bias}(\widehat{G},G)
	=
	\lim_{\lambda\to 1^-}
	\text{Bias}(\widehat{I}_\lambda,I_\lambda)
	&=
	\frac{1}{\alpha}
	\left\{ 
	\alpha
	-
	{1\over n}
	\int_0^{\infty}
	\frac{\Gamma^2\left(\alpha,\frac{t}{n}\right)}{\Gamma^2(\alpha)} \, 
	{\rm d}t
	\right\}
	-
	\frac{\Gamma\left(\alpha+\tfrac12\right)}
{\sqrt{\pi}\alpha\Gamma(\alpha)}
\\[0,2cm]
&=
1
-
{1\over \alpha}
\int_0^{\infty}
\frac{\Gamma^2\left(\alpha,t\right)}{\Gamma^2(\alpha)} \, 
{\rm d}t
-
\frac{\Gamma\left(\alpha+\tfrac12\right)}
{\sqrt{\pi}\alpha\Gamma(\alpha)}
=
0,
	\end{align*}
where the third equality follows from the change of variables $t \mapsto t/n$, 
and the last equality follows from identity \eqref{ide-last}.

The above result on the unbiasedness of the Gini coefficient estimator $\widehat{G}$ for gamma populations is well established in the literature; see, for example, \cite{Baydil2025,Vila-Saulo2025}.
\end{remark}

\section{Monte Carlo simulation study}\label{Monte Carlo simulation study}

In this section, we assess the finite-sample performance of the plug-in estimator $\widehat I_\lambda$ defined in Equation \eqref{def-I-lambda} by means of a Monte Carlo simulation study. The analysis is conducted under gamma populations. Let $X_1,\ldots,X_n$ be a random sample from a gamma distribution with shape parameter $\alpha>0$ and rate parameter $\beta>0$. Since the proposed index $I_\lambda$ is scale invariant, its value does not depend on $\beta$. Accordingly, without loss of generality, we set $\beta=1$ throughout the simulation study and vary only the shape parameter $\alpha$, the interpolation parameter $\lambda$, and the sample size $n$. This choice allows us to isolate the effects of distributional asymmetry and sample size on the estimator's performance.

\subsection{Evaluating the Bias and MSE of estimates}

The simulation design considers the values
\[
\alpha \in \{0.5,\,1,\,2,\,5,\,10\}, \quad
\lambda \in \{0.25,\,0.50,\,0.75\}, \quad
n \in \{10,\,20,\,40,\,80,\,120\}.
\]
These settings cover a broad range of relevant scenarios. In particular, smaller values of $\alpha$ correspond to more skewed gamma distributions, whereas larger values of $\alpha$ produce less dispersed and more symmetric populations. Likewise, the selected values of $\lambda$ represent distinct points in the interior of the family bridging the Hoover and Gini measures. For each combination of $(\alpha,\lambda,n)$, we generate $R=1000$ independent Monte Carlo samples.

For each replication, a sample $X_1,\ldots,X_n$ is generated from the $\mathrm{Gamma}(\alpha,1)$ distribution, and the estimator $\widehat I_\lambda$ is computed according to Equation \eqref{def-I-lambda}. The corresponding population value $I_\lambda$ is evaluated numerically using the integral representation derived for gamma distributions in Proposition \ref{prop-Hoover-index}. For each scenario, the true value $I_\lambda$ is computed once and used as a benchmark across all replications.

The performance of the estimator is evaluated using the empirical bias, mean squared error (MSE), and variance. Specifically, we compute
\[
\operatorname{Bias}_{MC}(\widehat I_\lambda)
=
\frac{1}{R}\sum_{r=1}^{R}\bigl(\widehat I_\lambda^{(r)}-I_\lambda\bigr),
\]
\[
\operatorname{MSE}_{MC}(\widehat I_\lambda)
=
\frac{1}{R}\sum_{r=1}^{R}\bigl(\widehat I_\lambda^{(r)}-I_\lambda\bigr)^2,
\]
and
\[
\operatorname{Var}_{MC}(\widehat I_\lambda)
=
\frac{1}{R-1}\sum_{r=1}^{R}
\bigl(\widehat I_\lambda^{(r)}-\overline{\widehat I_\lambda}\bigr)^2,
\]
where $\overline{\widehat I_\lambda} = ({1}/{R})\sum_{r=1}^R \widehat I_\lambda^{(r)}$ denotes the Monte Carlo average of the estimator.
These measures allow us to separately assess the systematic error (bias), the overall accuracy (MSE), and the dispersion of the estimator (variance). 
%In particular, recall that
% \[
% \operatorname{MSE}(\hat I_\lambda)
% =
% \operatorname{Bias}^2(\hat I_\lambda) + \operatorname{Var}(\hat I_\lambda),
% \]
% which highlights the trade-off between bias and variability.

The numerical implementation was carried out in \textsf{R}. The integral expressions required to evaluate $I_\lambda$ were computed numerically using adaptive quadrature methods. For each parameter configuration, the Monte Carlo summaries were obtained from $R=1000$ independent replications. The purpose of this study is threefold. First, it allows us to quantify the magnitude and direction of the finite-sample bias of $\widehat I_\lambda$ under different levels of skewness and sample size. Second, it evaluates the overall accuracy of the estimator through the MSE. Third, the empirical variance provides insight into the variability of the estimator across repeated samples.

The Monte Carlo results, reported in Table 1, indicate that the estimator exhibits the expected improvement as the sample size increases. In particular, both the MSE and the variance decrease systematically with $n$, indicating increased precision. The bias also tends to decrease in magnitude for moderate and large samples. Moreover, the results suggest that the estimator's performance depends on the shape parameter $\alpha$, with more skewed distributions (small $\alpha$) generally leading to larger bias and variability. This behavior is consistent with the increased difficulty of estimating inequality measures in highly asymmetric populations.

\begin{center}
\begin{longtable}{rrrrrrrr}
\caption{Monte Carlo results for the estimator $\widehat{I}_\lambda$ under Gamma$(\alpha,1)$ distributions, based on $R=1000$ replications. The table reports the true value $I_\lambda$, the Monte Carlo mean (Mean), empirical bias (Bias), mean squared error (MSE), and variance (Var).} \\
\toprule
$\alpha$ & $\lambda$ & $n$ & $I_{\lambda}$ & Mean & Bias & MSE & Var \\
\midrule
\endfirsthead

\toprule
$\alpha$ & $\lambda$ & $n$ & $I_{\lambda}$ & Mean & Bias & MSE & Var \\
\midrule
\endhead

\midrule
\multicolumn{8}{r}{\textit{Continued on next page}} \\
\midrule
\endfoot

\bottomrule
\endlastfoot
0.5 & 0.25 & 10 & 0.4959 & 0.4804 & -0.0156 & 0.0085 & 0.0082 \\
1.0 & 0.25 & 10 & 0.3779 & 0.3597 & -0.0182 & 0.0062 & 0.0059 \\
2.0 & 0.25 & 10 & 0.2785 & 0.2738 & -0.0047 & 0.0040 & 0.0040 \\
5.0 & 0.25 & 10 & 0.1807 & 0.1750 & -0.0057 & 0.0018 & 0.0017 \\
10.0 & 0.25 & 10 & 0.1289 & 0.1251 & -0.0038 & 0.0009 & 0.0009 \\
0.5 & 0.50 & 10 & 0.5260 & 0.5256 & -0.0004 & 0.0087 & 0.0087 \\
1.0 & 0.50 & 10 & 0.4044 & 0.3991 & -0.0053 & 0.0064 & 0.0064 \\
2.0 & 0.50 & 10 & 0.2998 & 0.2949 & -0.0049 & 0.0042 & 0.0042 \\
5.0 & 0.50 & 10 & 0.1954 & 0.1923 & -0.0031 & 0.0019 & 0.0019 \\
10.0 & 0.50 & 10 & 0.1396 & 0.1396 & 0.0000 & 0.0011 & 0.0011 \\
0.5 & 0.75 & 10 & 0.5718 & 0.5723 & 0.0005 & 0.0091 & 0.0091 \\
1.0 & 0.75 & 10 & 0.4450 & 0.4432 & -0.0018 & 0.0073 & 0.0073 \\
2.0 & 0.75 & 10 & 0.3324 & 0.3332 & 0.0007 & 0.0052 & 0.0052 \\
5.0 & 0.75 & 10 & 0.2177 & 0.2163 & -0.0014 & 0.0026 & 0.0026 \\
10.0 & 0.75 & 10 & 0.1558 & 0.1549 & -0.0009 & 0.0013 & 0.0013 \\

0.5 & 0.25 & 20 & 0.4959 & 0.4882 & -0.0077 & 0.0042 & 0.0042 \\
1.0 & 0.25 & 20 & 0.3779 & 0.3709 & -0.0070 & 0.0030 & 0.0029 \\
2.0 & 0.25 & 20 & 0.2785 & 0.2749 & -0.0036 & 0.0019 & 0.0019 \\
5.0 & 0.25 & 20 & 0.1807 & 0.1794 & -0.0014 & 0.0009 & 0.0009 \\
10.0 & 0.25 & 20 & 0.1289 & 0.1269 & -0.0020 & 0.0005 & 0.0005 \\
0.5 & 0.50 & 20 & 0.5260 & 0.5233 & -0.0027 & 0.0040 & 0.0040 \\
1.0 & 0.50 & 20 & 0.4044 & 0.4014 & -0.0030 & 0.0031 & 0.0031 \\
2.0 & 0.50 & 20 & 0.2998 & 0.2998 & 0.0000 & 0.0020 & 0.0020 \\
5.0 & 0.50 & 20 & 0.1954 & 0.1970 & 0.0016 & 0.0010 & 0.0010 \\
10.0 & 0.50 & 20 & 0.1396 & 0.1382 & -0.0014 & 0.0005 & 0.0005 \\
0.5 & 0.75 & 20 & 0.5718 & 0.5711 & -0.0007 & 0.0045 & 0.0045 \\
1.0 & 0.75 & 20 & 0.4450 & 0.4455 & 0.0005 & 0.0036 & 0.0036 \\
2.0 & 0.75 & 20 & 0.3324 & 0.3296 & -0.0028 & 0.0024 & 0.0024 \\
5.0 & 0.75 & 20 & 0.2177 & 0.2174 & -0.0003 & 0.0012 & 0.0012 \\
10.0 & 0.75 & 20 & 0.1558 & 0.1557 & -0.0001 & 0.0006 & 0.0006 \\

0.5 & 0.25 & 40 & 0.4959 & 0.4946 & -0.0013 & 0.0021 & 0.0021 \\
1.0 & 0.25 & 40 & 0.3779 & 0.3756 & -0.0023 & 0.0014 & 0.0014 \\
2.0 & 0.25 & 40 & 0.2785 & 0.2772 & -0.0013 & 0.0009 & 0.0009 \\
5.0 & 0.25 & 40 & 0.1807 & 0.1802 & -0.0005 & 0.0004 & 0.0004 \\
10.0 & 0.25 & 40 & 0.1289 & 0.1281 & -0.0008 & 0.0002 & 0.0002 \\
0.5 & 0.50 & 40 & 0.5260 & 0.5236 & -0.0024 & 0.0021 & 0.0021 \\
1.0 & 0.50 & 40 & 0.4044 & 0.4038 & -0.0005 & 0.0016 & 0.0016 \\
2.0 & 0.50 & 40 & 0.2998 & 0.3001 & 0.0003 & 0.0010 & 0.0010 \\
5.0 & 0.50 & 40 & 0.1954 & 0.1947 & -0.0007 & 0.0005 & 0.0005 \\
10.0 & 0.50 & 40 & 0.1396 & 0.1393 & -0.0003 & 0.0003 & 0.0003 \\
0.5 & 0.75 & 40 & 0.5718 & 0.5681 & -0.0037 & 0.0022 & 0.0022 \\
1.0 & 0.75 & 40 & 0.4450 & 0.4432 & -0.0019 & 0.0018 & 0.0018 \\
2.0 & 0.75 & 40 & 0.3324 & 0.3311 & -0.0014 & 0.0011 & 0.0011 \\
5.0 & 0.75 & 40 & 0.2177 & 0.2180 & 0.0003 & 0.0006 & 0.0006 \\
10.0 & 0.75 & 40 & 0.1558 & 0.1558 & -0.0001 & 0.0003 & 0.0003 \\

0.5 & 0.25 & 80 & 0.4959 & 0.4937 & -0.0023 & 0.0010 & 0.0010 \\
1.0 & 0.25 & 80 & 0.3779 & 0.3762 & -0.0017 & 0.0007 & 0.0007 \\
2.0 & 0.25 & 80 & 0.2785 & 0.2767 & -0.0018 & 0.0005 & 0.0005 \\
5.0 & 0.25 & 80 & 0.1807 & 0.1796 & -0.0012 & 0.0002 & 0.0002 \\
10.0 & 0.25 & 80 & 0.1289 & 0.1283 & -0.0006 & 0.0001 & 0.0001 \\
0.5 & 0.50 & 80 & 0.5260 & 0.5254 & -0.0007 & 0.0012 & 0.0012 \\
1.0 & 0.50 & 80 & 0.4044 & 0.4037 & -0.0007 & 0.0008 & 0.0008 \\
2.0 & 0.50 & 80 & 0.2998 & 0.2988 & -0.0010 & 0.0005 & 0.0005 \\
5.0 & 0.50 & 80 & 0.1954 & 0.1956 & 0.0002 & 0.0002 & 0.0002 \\
10.0 & 0.50 & 80 & 0.1396 & 0.1392 & -0.0004 & 0.0001 & 0.0001 \\
0.5 & 0.75 & 80 & 0.5718 & 0.5701 & -0.0017 & 0.0010 & 0.0010 \\
1.0 & 0.75 & 80 & 0.4450 & 0.4452 & 0.0001 & 0.0009 & 0.0009 \\
2.0 & 0.75 & 80 & 0.3324 & 0.3324 & -0.0000 & 0.0006 & 0.0006 \\
5.0 & 0.75 & 80 & 0.2177 & 0.2177 & -0.0001 & 0.0003 & 0.0003 \\
10.0 & 0.75 & 80 & 0.1558 & 0.1557 & -0.0001 & 0.0002 & 0.0002 \\

0.5 & 0.25 & 120 & 0.4959 & 0.4946 & -0.0013 & 0.0007 & 0.0007 \\
1.0 & 0.25 & 120 & 0.3779 & 0.3774 & -0.0005 & 0.0005 & 0.0005 \\
2.0 & 0.25 & 120 & 0.2785 & 0.2778 & -0.0007 & 0.0003 & 0.0003 \\
5.0 & 0.25 & 120 & 0.1807 & 0.1799 & -0.0009 & 0.0001 & 0.0001 \\
10.0 & 0.25 & 120 & 0.1289 & 0.1285 & -0.0004 & 0.0001 & 0.0001 \\
0.5 & 0.50 & 120 & 0.5260 & 0.5254 & -0.0007 & 0.0007 & 0.0007 \\
1.0 & 0.50 & 120 & 0.4044 & 0.4048 & 0.0005 & 0.0005 & 0.0005 \\
2.0 & 0.50 & 120 & 0.2998 & 0.2995 & -0.0003 & 0.0004 & 0.0004 \\
5.0 & 0.50 & 120 & 0.1954 & 0.1958 & 0.0004 & 0.0001 & 0.0001 \\
10.0 & 0.50 & 120 & 0.1396 & 0.1393 & -0.0003 & 0.0001 & 0.0001 \\
0.5 & 0.75 & 120 & 0.5718 & 0.5715 & -0.0003 & 0.0007 & 0.0007 \\
1.0 & 0.75 & 120 & 0.4450 & 0.4454 & 0.0004 & 0.0005 & 0.0005 \\
2.0 & 0.75 & 120 & 0.3324 & 0.3326 & 0.0002 & 0.0004 & 0.0004 \\
5.0 & 0.75 & 120 & 0.2177 & 0.2177 & -0.0000 & 0.0002 & 0.0002 \\
10.0 & 0.75 & 120 & 0.1558 & 0.1558 & -0.0001 & 0.0001 & 0.0001 \\

\end{longtable}\label{table:mc}
\end{center}

\subsection{Comparison with the competing index $J_\lambda$}

In this subsection, we compare the bias of the estimates for $I_\lambda$ with the competing index $J_\lambda$, defined respectively in \eqref{eq:index} and \eqref{ineq-H-G}.

The natural estimator for \( J_\lambda \), as defined in \eqref{ineq-H-G-1}, is
$$\widehat{J}_\lambda = (1-\lambda) \widehat{H} + \lambda \widehat{G},$$
where $\widehat{H}$ and $\widehat{G}$ are given in \eqref{eq:Hoover_estimator} and \eqref{eq:Gini_estimator}, respectively.

Fixing $\alpha=5$ and varying $\lambda$ and $n$, Figure~\ref{fig:comparison} shows that, in general, the bias values of $\widehat{I}_\lambda$ are smaller than those of the competing index, as previously reported in Remark \ref{remark:J_index}.
This shows that although $J_\lambda$ has a simpler mathematical formulation, its estimation does not improve upon the good results of the $\widehat {I}_{\lambda }$ estimator.

Recall that $I_{\lambda }\leqslant J_{\lambda }$ and $\widehat I_{\lambda }\leqslant \widehat J_{\lambda }$ (see Remarks \ref{remark:J_index} and \ref{remark:J_index-1}). A relative bias may be useful to analyze, as it avoids the influence of the true parameter values in these comparisons. However, the graphical analyses obtained were essentially the same, differing only by the scale of the y-axis. Therefore, we present only the biased results, since our conclusions would remain unchanged under the relative bias evaluation metric.

\begin{figure}[ht]
\centering
\includegraphics[width=0.90\textwidth]{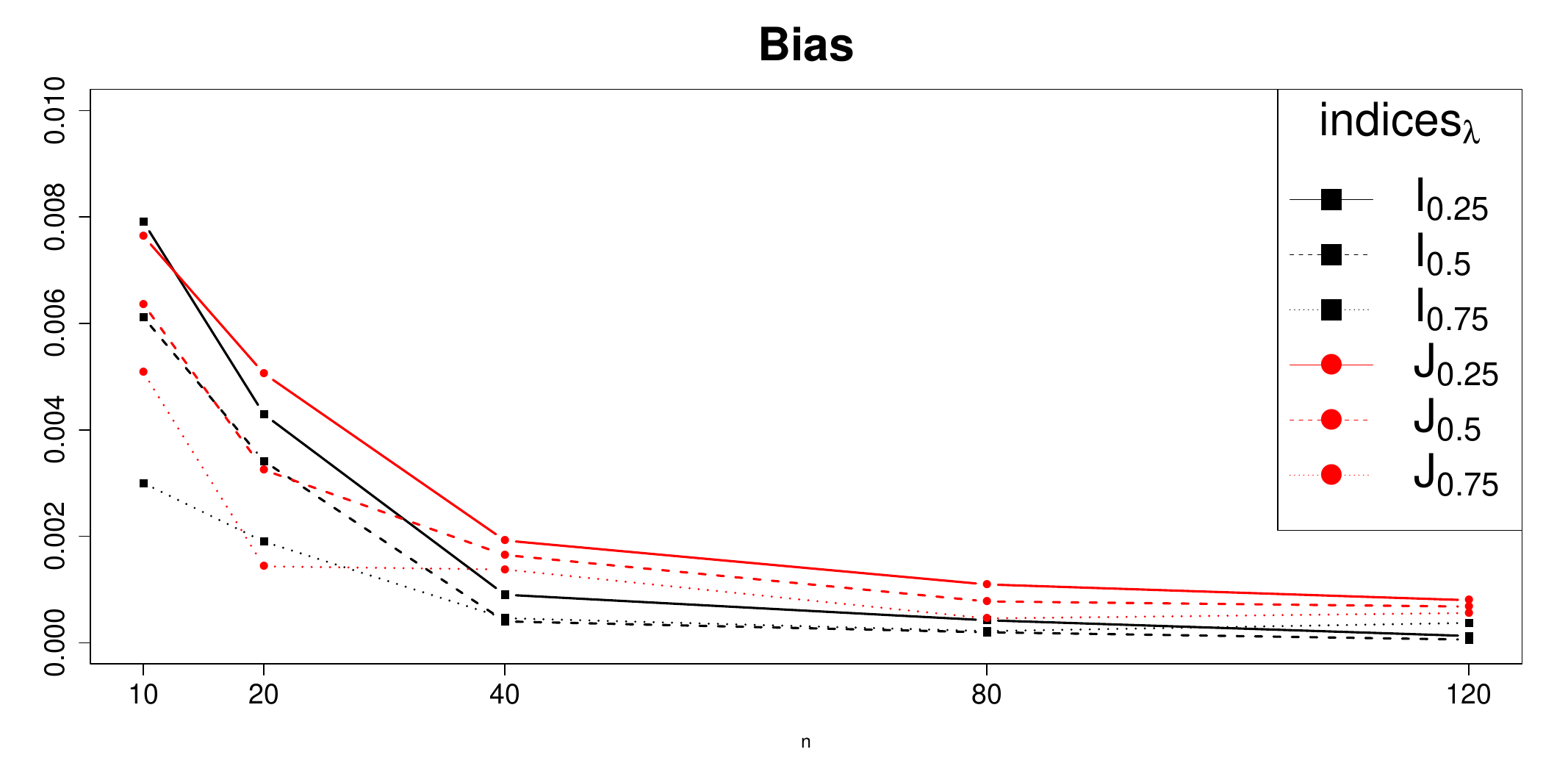}
\caption{Bias of the estimators $\widehat{I}_\lambda$ (black) and $\widehat{J}_\lambda$ (red) under Gamma$(\alpha,1)$ distributions, based on $R=1000$ Monte Carlo replications, for $\lambda\in\{0.25, 0.5, 0.75\}$.}
\label{fig:comparison}
\end{figure}

\section{Empirical application}\label{empirical application}

To illustrate the practical usefulness of the proposed family of inequality measures, we consider the distribution of GDP per capita across countries in the Americas using data from the World Bank. The variable of interest is GDP per capita measured at purchasing power parity (PPP), expressed in constant 2021 international dollars; see \url{https://ourworldindata.org/grapher/gdp-per-capita-worldbank}. Since the proposed index is scale invariant, the units of measurement do not affect the results.

We compute the Hoover index, the Gini coefficient, and the proposed estimator $\widehat I_\lambda$ for different values of $\lambda \in [0,1]$. Recall that, by construction, the proposed index satisfies $I_0 = H$ and $I_1 = G$, so that intermediate values of $\lambda$ provide a continuous transition between these two classical measures. Table~\ref{tab:application_indices} reports the values of the Hoover index, the Gini coefficient, and selected values of the proposed index. The estimated Hoover index is 0.229, while the Gini coefficient is given by 0.329. The intermediate values $\widehat I_{0.25}$, $\widehat I_{0.50}$, and $\widehat I_{0.75}$ lie between these two benchmarks, as expected.

\begin{table}[ht]
\centering
\begin{tabular}{lc}
\toprule
Measure & Value \\
\midrule
Hoover & 0.229 \\
$I_{0.25}$ & 0.237 \\
$I_{0.50}$ & 0.258 \\
$I_{0.75}$ & 0.290 \\
Gini & 0.329 \\
\bottomrule
\end{tabular}
\caption{Inequality measures for GDP per capita across countries in the Americas.}
\label{tab:application_indices}
\end{table}

Figure~\ref{fig:path} displays the function $\lambda \mapsto \widehat I_\lambda$. As expected from the theoretical construction, the curve is continuous and increasing, connecting the Hoover index at $\lambda=0$ to the Gini coefficient at $\lambda=1$. This confirms that the proposed index provides a smooth interpolation between these two measures.
\begin{figure}[ht]
\centering
\includegraphics[width=0.90\textwidth]{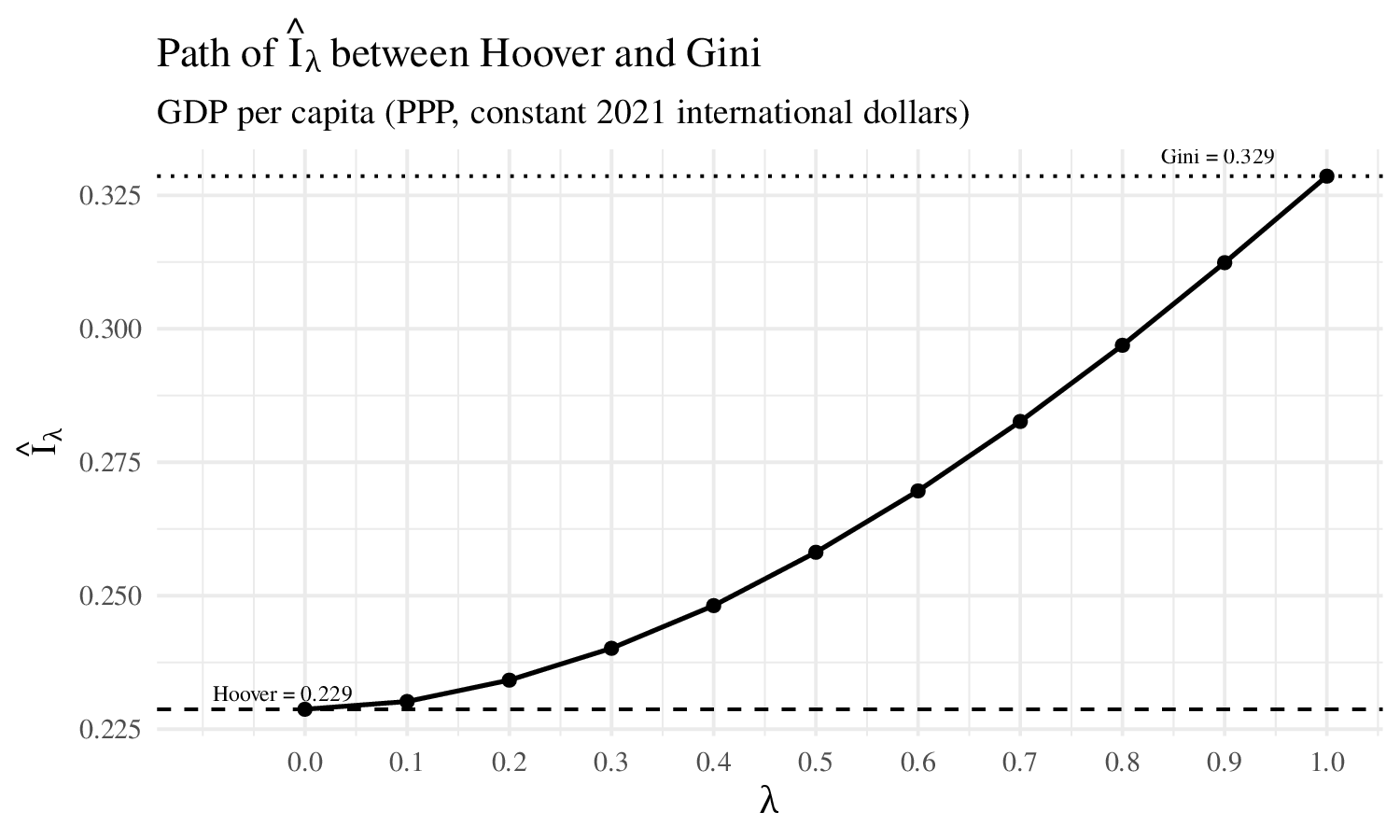}
\caption{Path of $\widehat I_\lambda$ as a function of $\lambda$, with Hoover and Gini as limiting cases.}
\label{fig:path}
\end{figure}

The empirical results highlight the flexibility of the proposed index. For values of $\lambda$ close to zero, the index behaves similarly to the Hoover measure, placing more emphasis on deviations from the mean. In contrast, for values of $\lambda$ close to one, the index approaches the Gini coefficient and therefore reflects pairwise differences across countries. This behavior can be clearly observed in Figure~\ref{fig:path}, where the slope of the curve indicates how sensitive the inequality assessment is to the choice of $\lambda$. A relatively smooth transition suggests that the overall level of inequality is robust to the choice of measure, whereas steeper changes would indicate greater sensitivity.

\section{Concluding remarks}\label{Concluding remarks}

We proposed a new family of inequality indices that bridges the Hoover index and the Gini coefficient through a continuous interpolation parameter. The index provides a unified framework that captures both deviations from the mean and pairwise differences, offering greater flexibility in inequality measurement. We established its main theoretical properties and derived analytical representations that enable explicit evaluation under gamma distributions. From a statistical perspective, we analyzed the plug-in estimator and obtained expressions for its expectation and bias. The Monte Carlo results show that the estimator performs well in finite samples, with improved accuracy as the sample size increases. The empirical application illustrates the practical relevance of the proposed index, highlighting its ability to provide a nuanced assessment of inequality and to serve as a complement to classical measures.

%\clearpage

\paragraph*{Acknowledgements}
The research was supported in part by CNPq and CAPES grants from the Brazilian government.

\paragraph*{Disclosure statement}
There are no conflicts of interest to disclose.

%%%%%%%%%%%%%%%%%%%%%%%%%%%%%%%%%%%%%%%%%%%%%%%%%%%%%%%%%%%%%

%\bibliographystyle{unsrt}
\bibliographystyle{apalike}
%	\bibliography{BibTexFileName}

%\appendix
%\section{Appendix}
%\label{sec:appendix_a}

\end{document}